\newtheorem{theorem}{Theorem}
\newtheorem{lemma}{Lemma}
\newtheorem{corollary}{Corollary}
\begin{document}

\title{Maximal Line Digraphs}
\author[1]{Quentin Japhet}
\author[2,3]{Dimitri Watel}
\author[1]{Dominique Barth}
\author[4]{Marc-Antoine Weisser}

\date{quentin.japhet@uvsq.fr (Corresponding author), dimitri.watel@ensiie.fr, dominique.barth@uvsq.fr, marc-antoine.weisser@centralesupelec.fr}

\affil[1]{DAVID, University of Versailles-St Quentin, Paris-Saclay University, 45 avenue des États-Unis, 78035, Versailles, France}
\affil[2]{ENSIIE, 1 square de la Résistance, 91000, Evry, France}
\affil[3]{SAMOVAR, 9 Rue Charles Fourier, 91000, Evry, France}
\affil[4]{LISN, CentraleSupelec, Paris-Saclay University, 1 Rue Raimond Castaing, 91190, Gif-sur-Yvette, France}

\newcommand{\degsuccs}[2][]{d_{\ifthenelse{\equal{#1}{}}{}{#1,}#2}^+}
\newcommand{\degpreds}[2][]{d_{\ifthenelse{\equal{#1}{}}{}{#1,}#2}^-}
\newcommand{\degree}[2][]{d_{\ifthenelse{\equal{#1}{}}{}{#1,}#2}}
\newcommand{\preds}[2][]{\Gamma_{\ifthenelse{\equal{#1}{}}{}{#1,}#2}^-}
\newcommand{\succs}[2][]{\Gamma_{\ifthenelse{\equal{#1}{}}{}{#1,}#2}^+}
\newcommand{\neigh}[2][]{\Gamma_{\ifthenelse{\equal{#1}{}}{}{#1,}#2}}
\newcommand{\antineigh}[2][]{\Gamma_{\ifthenelse{\equal{#1}{}}{}{#1,}#2}^0}
\newcommand{\predsarcs}[2][]{\gamma_{\ifthenelse{\equal{#1}{}}{}{#1,}#2}^-}
\newcommand{\succsarcs}[2][]{\gamma_{\ifthenelse{\equal{#1}{}}{}{#1,}#2}^+}

\newcommand{\dw}[1]{\todo[color=blue!40!white]{DW : #1}}
\newcommand{\dwi}[1]{\todo[inline,color=blue!40!white]{DW : #1}}

\sethlcolor{orange}
\newcommand{\change}[1]{\begingroup \hl{#1} \endgroup}



\maketitle

\begin{abstract}
    A line digraph $L(G) = (A, E)$ is the digraph constructed from the digraph $G = (V, A)$ such that there is an arc $(a,b)$ in $L(G)$ if the terminal node of $a$ in $G$ is the initial node of $b$.
    The maximum number of arcs in a line digraph with $m$ nodes is $(m/2)^2 + (m/2)$ if $m$ is even, and $((m - 1)/2)^2 + m - 1$ otherwise.
    For $m \geq 7$, there is only one line digraph with as many arcs if $m$ is even, and if $m$ is odd, there are two line digraphs, each being the transpose of the other.
\end{abstract}

{\bf Keywords:} Line Digraph, Graph theory, Combinatorial, Linegraph

\section{Introduction}
    Introduced in \cite{beineke1968derived}, the \emph{line digraph} transformation is, given a simple digraph $G = (V, A)$ with $m$ arcs, there is an arc $(a,b)$ in the line digraph $L(G) = (A, E)$ if the terminal node of $a$ in $G$ is the initial node of $b$. We say $G$ is a \emph{root digraph} of $L(G)$. 
    The following caracterization of line digraphs is provided in \cite{beineke1968derived}.
    \begin{theorem}[\cite{beineke1968derived}, Theorem 7]
        A digraph is a line digraph if and only if none of the \emph{Shortcut}, \emph{Eight} and \emph{Deviation} digraphs shown in Figure~\ref{fig:caracterization} is a subgraph, and every $\emph{Z}$ digraph is in a $K_{2,2}$.
    \end{theorem}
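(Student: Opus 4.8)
The plan is to prove the two implications separately, passing through the classical algebraic reformulation (due to Heuchenne): a digraph $D$ is a line digraph if and only if, for every pair of nodes $u,v$, the out-neighbourhoods satisfy $\Gamma^+(u)=\Gamma^+(v)$ or $\Gamma^+(u)\cap\Gamma^+(v)=\emptyset$. I would first argue that this ``equal-or-disjoint'' property is exactly what the combinatorial conditions of the statement encode. The requirement that every $Z$ sits inside a $K_{2,2}$ is the engine: a $Z$ is precisely a witness $a\to c$, $b\to c$, $b\to d$ that $a$ and $b$ share the out-neighbour $c$ while $d\in\Gamma^+(b)\setminus\Gamma^+(a)$, and forcing the missing arc $a\to d$ is exactly the closure that upgrades ``the two neighbourhoods meet'' to ``the two neighbourhoods coincide''. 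The roles of the \emph{Shortcut}, \emph{Eight} and \emph{Deviation} digraphs are to forbid the degenerate coincidences (loops, digons, collapsed nodes) where a naive $Z$-to-$K_{2,2}$ completion would be vacuous or would be forced to create a self-arc or a repeated arc; I would verify case by case that their absence guarantees the equal-or-disjoint property even when some of the four nodes of a putative $K_{2,2}$ are made to merge.

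For the forward implication, assume $D=L(G)$ and use the arc representation: each node of $D$ is an arc $(x,y)$ of $G$, and its out-neighbourhood is the set of arcs of $G$ leaving $y$. Hence two nodes $(x,y)$ and $(x',y')$ have equal out-neighbourhoods when $y=y'$ and disjoint ones when $y\neq y'$, since every arc has a unique initial endpoint. This immediately yields the equal-or-disjoint property, so no $Z$ can fail to close into a $K_{2,2}$, and the same endpoint bookkeeping rules out each of the three forbidden subgraphs.

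The converse is the substantive direction and the main obstacle, since it requires reconstructing a root digraph $G$ from $D$. I would define two equivalence relations on the nodes of $D$: $u\sim_{\mathrm{out}}v$ when $\Gamma^+(u)=\Gamma^+(v)$, and $u\sim_{\mathrm{in}}v$ when $\Gamma^-(u)=\Gamma^-(v)$; a short symmetric argument shows that the out-condition forces the analogous property on in-neighbourhoods, so both relations are genuine partitions. The vertices of $G$ should be the blocks of these partitions, after identifying the out-block of $u$ with the in-block of $w$ whenever $(u,w)$ is an arc of $D$ --- this is the step recording ``terminal endpoint of $u$ equals initial endpoint of $w$''. Each node $u$ of $D$ then becomes the arc of $G$ from its in-block to its out-block, and it remains to verify that $L(G)$ is isomorphic to $D$ and, crucially, that $G$ is \emph{simple}, i.e.\ that distinct nodes never map to the same arc and that no loop arises. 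The delicate points are precisely the well-definedness and consistency of the block identification together with the simplicity of $G$, and this is exactly where the absence of the three named subgraphs is needed, each of them corresponding to a way the identification could collapse two nodes onto one arc or produce a loop. I expect confirming these consistency and simplicity conditions, and then checking that the resulting incidence reproduces every arc of $D$ and no others, to be the most technical part of the argument.
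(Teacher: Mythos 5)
First, a point of comparison: the paper does \emph{not} prove this statement --- it is quoted with a citation to Beineke's Theorem~7 --- so there is no internal proof to measure your attempt against. Judged against the classical Heuchenne--Harary--Norman argument (which is the route you chose), your skeleton is right and your forward direction is fine, but the converse as organized in your first paragraph contains a genuine error. The equal-or-disjoint property is \emph{not} ``exactly what the combinatorial conditions encode''; it is strictly weaker. Both the Eight and the Deviation \emph{satisfy} equal-or-disjoint: for the Eight with centre $v$ and digon partners $a,b$ one has $\Gamma^+(v)=\{a,b\}$, $\Gamma^+(a)=\Gamma^+(b)=\{v\}$, pairwise equal or disjoint (likewise for in-neighbourhoods), and similarly for the Deviation $u\to a\to w$, $u\to b\to w$. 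Neither contains any $Z$, degenerate or not, so your plan to ``verify case by case that their absence guarantees the equal-or-disjoint property'' has nothing to verify for them --- only the Shortcut exclusion plays that role (it is the degenerate $Z$ with $a=d$, whose $K_{2,2}$-closure would be a loop). The deeper problem is that the version of Heuchenne's theorem you invoke is false for simple digraphs: equal-or-disjoint characterizes line digraphs of loopless \emph{multidigraphs}. The Eight is the line digraph of the multidigraph with arcs $(x,y)$, $(y,x)$, $(y,x)$, yet it is not the line digraph of any simple digraph (the digon partner of a node $(x,y)$ must be the arc $(y,x)$, hence is unique). So the chain ``conditions $\Leftrightarrow$ equal-or-disjoint'' plus ``equal-or-disjoint $\Leftrightarrow$ line digraph'' cannot be completed as stated.

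The fix is the division of labour that your third paragraph gestures at, inconsistently with your first: (i) no Shortcut together with $Z$-closure is equivalent, for loopless digraphs, to equal-or-disjoint (the Shortcut being exactly the merged case $a=d$); (ii) equal-or-disjoint alone yields your block reconstruction of a loopless root \emph{multidigraph} $G$ with $L(G)\cong D$; (iii) the exclusions of the Eight and the Deviation are equivalent to $G$ having no parallel arcs, because a parallel pair is precisely two distinct nodes $u\neq v$ of $D$ with a common in-neighbour $b$ and a common out-neighbour $c$, i.e.\ an Eight when $b=c$ and a Deviation when $b\neq c$. Two further cautions on the reconstruction itself: the vertices of $G$ cannot literally be the blocks of the two partitions, since all nodes with empty out-neighbourhood form a single block --- two isolated nodes of $D$ would then collapse onto parallel arcs even though they form a perfectly good line digraph --- so sources and sinks of $D$ need fresh, distinct endpoints; and looplessness of $G$ must be argued by showing the identification chains collapse under equal-or-disjoint (a forced loop unwinds exactly to a Shortcut). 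With that reorganization your plan becomes the standard correct proof.
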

    \newcommand\widthInterdit{0.24}
    \begin{figure}[ht!]
        \centering
        \hfill
        \begin{subfigure}[b]{\widthInterdit\textwidth}
            \centering
            \includegraphics[width=0.7\textwidth]{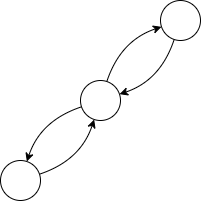}
            \caption{Eight}
            \label{fig:huit}
        \end{subfigure}
        \hfill
        \begin{subfigure}[b]{\widthInterdit\textwidth}
            \centering
            \includegraphics[width=0.7\textwidth]{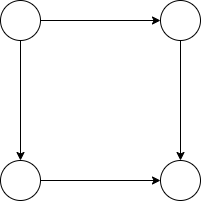}
            \caption{Deviation}
            \label{fig:deviation}
        \end{subfigure}
        \hfill
        \begin{subfigure}[b]{\widthInterdit\textwidth}
            \centering
            \includegraphics[width=0.7\textwidth]{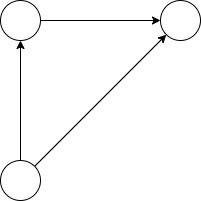}
            \caption{Shortcut}
            \label{fig:raccourci}
        \end{subfigure}
        \hfill
        \begin{subfigure}[b]{\widthInterdit\textwidth}
            \centering
            \includegraphics[width=0.7\textwidth]{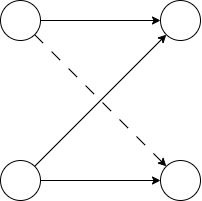}
            \caption{Z}
            \label{fig:Z}
        \end{subfigure}
        \hfill
        \captionsetup{justification=centering}
        \caption{Forbidden subdigraphs.\\The subgraph Z is allowed if the dotted arc is present, which forms $K_{2,2}$.}
        \label{fig:caracterization}
    \end{figure}
    Given a fixed value of $m$, we want to caracterize the line digraphs of $m$ nodes with maximum number of arcs. Equivalently, we want to caracterize the root digraphs with $m$ arcs maximizing $\Phi (G) = \sum_{v \in V} \degsuccs{v}\cdot \degpreds{v}$ (the outgoing and incoming degrees of $v$) as it is the number of arcs in its line digraph.
    \paragraph{Motivations}
    The undirected version of this question is trivial, the graph with the most edges, the clique, being a line graph. Previous works have been devoted to the same question for root graph with a fixed number of nodes and edges \cite{abrego2008sum, ahlswede1978graphs}. 
    The result is not obvious on the directed version of the problem, as forbidden digraphs are not induced. 
    Similar work has been done on other graph classes\cite{babinski2024maximal,hoffmann2023number,paul2009edge}.

    \paragraph{Results}In this paper, we first show in Section~\ref{sec:max_number} that the maximum number of arcs in the line digraphs with $m$ nodes is $\left(\frac{m}{2}\right)^2 + \frac{m}{2}$ arcs if $m$ is even, and $\left(\frac{m - 1}{2}\right)^2 + m - 1$ arcs otherwise. In Section~\ref{sec:unicity}, we then show that those bounds are tight by providing all the root digraphs achieving them.

    \paragraph{Notations}Like used previously, $\emph{m}$ denotes the number of arcs in the root digraphs and thus the number of node in the line digraphs while $\emph{n}$ denotes the number of nodes in the root digraphs.
    We use $\degpreds{u}$ and $\degsuccs{u}$ for the incoming and outgoing degrees of $u$. If necessary, we specify the digraph $G$ with $\degree[G]{u}$.\\
    A \emph{circuit} is a sequence of consecutive arcs whose two extremity nodes are identical.
    An arc is \emph{incident} on a node if one of its extremities is that node.\\
    Given $G = (V, A)$, $\emph{G - u}$ is the digraph $(V \setminus u, A \setminus \{(u,x), (x,u)\}_{\forall x\in V})$ and $\emph{G - (u, v)}$ is the digraph $(V, A \setminus (u, v))$.
    An \emph{optimal digraph} $G$ with $m$ arcs is such that, for all digraphs $F$ with $m$ arcs, $\Phi (F) \leq \Phi (G)$.
    The \emph{transpose digraph} $G^T = (V,A^T)$ of the digraph $G = (V,A)$ is such $(a,b) \in A \Leftrightarrow (b,a) \in A^T$.

\section{Maximum number of arcs in a line digraph}
\label{sec:max_number}

    In this section we provide a close formula for the maximum number of arcs in a line digraph of some given order. 

    \begin{lemma}
    \label{lem:maxlinedigraphedegre}
        Let $H$ be a line digraph of odd order $m \geq 7$, then $H$ contains at least one node with $\degree{v} \leq \frac{m-1}{2}$.
    \end{lemma}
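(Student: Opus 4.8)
The plan is to pass from $H$ to a root digraph. Writing $H = L(G)$ with $G$ a root digraph on $m$ arcs, the node of $H$ attached to an arc $(u,v)$ of $G$ has out-degree $\degsuccs{v}$ and in-degree $\degpreds{u}$ (degrees taken in $G$), so its degree in $H$ is $\degpreds{u}+\degsuccs{v}$. Thus the statement is equivalent to exhibiting an arc $(u,v)$ of $G$ with $\degpreds{u}+\degsuccs{v}\le\frac{m-1}{2}$, and I would argue by contradiction: assume every arc $(u,v)$ satisfies $\degpreds{u}+\degsuccs{v}\ge k$, where $k=\frac{m+1}{2}$ is an integer since $m$ is odd. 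A first remark is that summing this inequality over the $m$ arcs gives only $\Phi(G)\ge\frac{m(m+1)}{4}$, which is too weak to contradict anything; so a plain averaging argument cannot succeed, and the oddness of $m$ will have to be used in an essential way.

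Next I would extract the coarse structure forced by the assumption. Call a node \emph{in-heavy} if its in-degree is at least $\frac{m+1}{4}$ and \emph{out-heavy} if its out-degree is; since the in-degrees (resp.\ out-degrees) of $G$ sum to $m$, there are at most three in-heavy and at most three out-heavy nodes. If an arc had an in-light tail and an out-light head, its two degrees would sum to less than $k$, a contradiction; hence every arc has an in-heavy tail or an out-heavy head. Consequently a light node (in-light and out-light) can receive only from in-heavy nodes and send only to out-heavy nodes, so it has in-degree and out-degree at most $3$ and is attached solely to the bounded \emph{heavy core} $S$, which has at most six nodes. In short, $G$ consists of a small core together with a low-degree periphery hanging off it.

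The decisive step is a parity argument on the core, and this is where I expect the real difficulty. The mechanism is transparent when $S$ is a single node $c$: then every arc is incident to $c$, so $\degpreds{c}+\degsuccs{c}=m$, and being two integers summing to the odd number $m$, one of them is at most $\frac{m-1}{2}$; say $\degpreds{c}\le\frac{m-1}{2}<k$. Then for each out-neighbour $\ell$ of $c$ the arc $(c,\ell)$ can meet the hypothesis only if $\degsuccs{\ell}\ge 1$, i.e.\ $\ell$ also points back to $c$ (its only possible out-neighbour). Hence all $\degsuccs{c}$ out-neighbours of $c$ are in-neighbours of $c$, forcing $\degpreds{c}\ge\degsuccs{c}$; this contradicts $\degpreds{c}\le\frac{m-1}{2}<\frac{m+1}{2}\le\degsuccs{c}$.

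The hard part is to globalise this computation: for a core of up to six nodes one must run the same in-degree versus out-degree accounting across $S$, now including the arcs internal to $S$, to locate a pendant arc of degree at most $\frac{m-1}{2}$. I would also dispatch separately the finitely many dense configurations with $|V(G)|\le 6$ (equivalently, $m$ bounded), which is precisely where the hypothesis $m\ge 7$ enters, since for very small odd $m$ the statement genuinely fails — for instance the directed triangle for $m=3$ has all degrees equal to $2>\frac{m-1}{2}$.
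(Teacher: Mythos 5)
Your reduction to the root digraph is correct: every line digraph of order $m$ is $L(G)$ for some digraph $G$ with $m$ arcs, and the node of $L(G)$ corresponding to an arc $(u,v)$ has in-degree $\degpreds[G]{u}$ and out-degree $\degsuccs[G]{v}$, so the lemma is indeed equivalent to finding an arc with $\degpreds{u}+\degsuccs{v}\le\frac{m-1}{2}$. The heavy/light structure you derive is also correct (at most three in-heavy and three out-heavy nodes, every arc needs an in-heavy tail or an out-heavy head, light nodes attach only to the core $S$), and the one-node-core case is handled completely. But the proof stops exactly where the difficulty lies: the case $2\le|S|\le 6$ is explicitly deferred as ``the hard part'', and it is not a routine extension of the $|S|=1$ computation. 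That computation hinges on the identity $\degpreds{c}+\degsuccs{c}=m$, whose two integer summands cannot both exceed $\frac{m-1}{2}$ when $m$ is odd; once the arcs are spread over several core nodes there is no such identity --- arcs internal to $S$ are shared between two core degree counts, and the parity of $m$ is no longer pinned to any single vertex, which is precisely the feature your argument exploits. Moreover, the ``finitely many dense configurations'' with $V(G)=S$ and $|V(G)|\le 6$ are not excluded by the hypothesis $m\ge 7$: a simple digraph on six nodes can carry up to $30$ arcs, and the paper's base cases only cover $m\le 6$, so that sub-case is also a non-trivial unfinished check. As it stands, the proposal is a plausible plan with one worked special case, not a proof.

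For comparison, the paper argues entirely inside $H$ via the forbidden-subgraph characterization of line digraphs: a node lying in two 2-circuits would create an Eight, so 2-circuits pair nodes up and odd $m$ yields a node $v$ in no 2-circuit, hence with at least $\frac{m+1}{2}$ distinct neighbours under the contradiction hypothesis; Shortcut and Deviation exclusions then rigidify the neighbourhood of $v$ (two further nodes $x,y$ outside 2-circuits, a common neighbour $z$) until no orientation of the links between $x$, $y$ and $z$ is admissible. Your root-digraph route avoids the characterization theorem entirely and is in the spirit of the arguments of Section~\ref{sec:unicity}, which makes it attractive; but to count as a proof it must close the multi-node-core cases, and that is where all the work of the lemma actually resides.
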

    \begin{proof}
        Let $m = 2p + 1$ and $H = (A, E)$ be a line digraph with order $m$. We assume that, for every node $v_i$, $\degree{v_i} \geq p + 1$.

        As $m$ is odd, some node $v \in A$ does not belong to a 2-circuit otherwise there exists an Eight (figure~\ref{fig:caracterization}) digraph in $H$. Consequently there are at least $p + 1$ distinct neighbors to $v$. With $\preds{v}$, the set of predecessor nodes of $v$, $\succs{v}$ the set of successor nodes of $v$ and $\antineigh{v}$ the set of nodes not adjacent to $v$, we have $|\preds{v}| + |\succs{v}| \geq p+1$, $|\antineigh{v}| \leq p-1$ and $|\preds{v}| + |\succs{v}| + |\antineigh{v}| = 2p$.

        As shown in Figure~\ref{fig:Imp_2}, if two nodes of $\preds{v}$ or two nodes of $\succs{v}$ are linked, $H$ contains a Shortcut. Similarly, a node of $\preds{v}$ cannot be the predecessor of a node in $\succs{v}$. 
        Consequently, if a node of $\preds{v} \cup \succs{v}$ belongs to a 2-circuit, then the other node of the circuit is in $\antineigh{v}$. As $|\preds{v} \cup \succs{v}| - |\antineigh{v}| \geq 2$, there exist two nodes in $\preds{v} \cup \succs{v}$ that are not contained in a 2-circuit. Let $x$ be one of those nodes, there are at least $p + 1$ distinct neighbors to it.

        As shown in Figure~\ref{fig:Imp_3}, there is at most one successor in $\preds{v}$ to a node in $\succs{v}$ otherwise $H$ contains a Deviation. Similarly, there is at most one predecessor in $\succs{v}$ to a node in $\preds{v}$. Thus there is at most one neighbor $y$ in $\preds{v} \cup \succs{v}$ to $x$.

        As there are $p + 1$ distinct neighbors to $x$ and $|\antineigh{v}| \leq p - 1$, every node of $\antineigh{v}$ is neighbor of $x$. As $y$ may belong to at most one 2-circuit, then at least $p - 2$ nodes of $\antineigh{v}$ are neighbors of $y$ (and thus common neighbors of $x$ and $y$). Note that $p - 2 \geq 1$ as $m \geq 7$. Let $z$ be any of those nodes. 

        \newcommand\widthImpair{0.32}
        \begin{figure}[ht!]
            \centering
            \begin{subfigure}[b]{\widthImpair\textwidth}
                \centering
                \stackinset{c}{0\textwidth}{c}{0.15\textwidth}{v}{
                    \includegraphics[height=3.5cm]{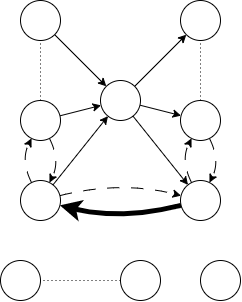}
                }
                \caption{Adding a dotted arc creates a Shortcut; adding a bold arc is possible.}
                \label{fig:Imp_2}
            \end{subfigure}
            \hfill
            \begin{subfigure}[b]{\widthImpair\textwidth}
                \centering
                \stackinset{c}{0\textwidth}{c}{0.15\textwidth}{v}{
                    \includegraphics[height=3.5cm]{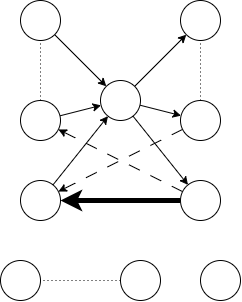}
                }
                \caption{If the bold arc is added, dotted arcs create a Deviation with the node $v$}
                \label{fig:Imp_3}
            \end{subfigure}
            \hfill
            \begin{subfigure}[b]{\widthImpair\textwidth}
                \centering
                \stackinset{c}{0\textwidth}{c}{0.15\textwidth}{v}{
                    \stackinset{c}{0.24\textwidth}{c}{-0.15\textwidth}{x}{
                        \stackinset{c}{-0.24\textwidth}{c}{-0.15\textwidth}{y}{
                            \stackinset{c}{0.055\textwidth}{b}{0.045\textwidth}{z}{
                                \includegraphics[height=3.5cm]{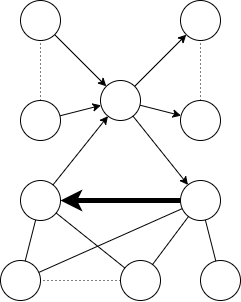}
                }}}}
                \caption{Arcs between $x$ or $y$ and $\antineigh[]{v}$ nodes are not currently oriented.}
                \label{fig:Imp_4}
            \end{subfigure}
            \captionsetup{justification=centering}
            \caption{Arcs between $\preds[]{v}$, $\succs[]{v}$ and $\antineigh[]{v}$}
            \label{fig:Imp_2_3}
        \end{figure}
        Figure~\ref{fig:Imp_4} illustrates those links when $x \in \succs{v}$. The case where $x \in \preds{v}$ is identical (and consists in swapping $x$ and $y$ on the figure). Any orientation of the links $(y, z)$ and $(x, z)$ leads either to a Shortcut or a Deviation. This last contradiction invalidates the hypothesis and prove the lemma.
    \end{proof}

    \begin{theorem}
    \label{the:maxlinedigraphe}
        In a line digraph $H$ of order $m$, there are at most $\left(\frac{m}{2}\right)^2 + \frac{m}{2}$ arcs if $m$ is even, and at most $\left(\frac{m - 1}{2}\right)^2 + m - 1$ arcs otherwise. Those bounds are tight.
    \end{theorem}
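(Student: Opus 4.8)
The plan is to prove the upper bound by strong induction on $m$ and to obtain tightness from two explicit families of root digraphs. Throughout write $f(m)$ for the claimed bound and recall from Section~1 that a root digraph $G$ with $m$ arcs produces the line digraph $L(G)$ of order $m$ with exactly $\Phi(G) = \sum_{v} \degsuccs{v}\degpreds{v}$ arcs. Two elementary facts drive the induction. First, since $\sum_{v}\degree{v} = 2|E(H)|$, deleting a node $u$ removes precisely $\degree{u}$ arcs, so $|E(H)| = |E(H-u)| + \degree{u}$. Second, $H-u$ is again a line digraph: it is the subdigraph of $H$ induced on the remaining nodes, the Shortcut, Eight and Deviation configurations are inherited by subdigraphs, and any $Z$ occurring in $H-u$ already occurs in $H$ on the same four nodes, hence lies there inside a $K_{2,2}$ on those four nodes, which survives the deletion. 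Thus node deletion keeps us inside the class, and the induction hypothesis applies to $H-u$ of order $m-1$.

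The inductive step deletes a node of small degree, chosen so that the arithmetic closes exactly. For odd $m \ge 7$, Lemma~\ref{lem:maxlinedigraphedegre} yields a node $u$ with $\degree{u} \le \frac{m-1}{2}$; as $H-u$ has even order $m-1$, the hypothesis gives $|E(H-u)| \le \bigl(\frac{m-1}{2}\bigr)^2 + \frac{m-1}{2}$, and adding $\degree{u}$ lands on exactly $\bigl(\frac{m-1}{2}\bigr)^2 + m - 1 = f(m)$. For even $m$ I would prove the companion statement that a line digraph of even order $m$ has a node $u$ with $\degree{u} \le \frac m2 + 1$, by rerunning the argument of Lemma~\ref{lem:maxlinedigraphedegre}: assuming every degree is at least $\frac m2 + 2$ forces $|\preds{v}\cup\succs{v}| \ge \frac m2 + 1$ and $|\antineigh{v}| \le \frac m2 - 2$ at any reference node $v$, and since no $2$-circuit lies inside $\preds{v}\cup\succs{v}$ (Shortcut/Deviation), at most $|\antineigh{v}| + 1$ nodes of $\preds{v}\cup\succs{v}$ lie in a $2$-circuit, leaving a node $x$ outside all $2$-circuits to which the same Shortcut/Deviation case analysis applies and produces a contradiction. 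Deleting such a $u$ leaves an odd-order line digraph, and $|E(H-u)| + \degree{u} \le \bigl(\frac{m}{2}\bigr)^2 + \frac m2 = f(m)$. The finitely many orders $m < 7$, where the degree lemmas do not apply, are checked directly.

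For tightness I exhibit the extremal root digraphs and evaluate $\Phi$. For even $m$ put $p = \frac m2$ and take a center $c$ joined by a $2$-circuit to each of $p$ leaves $s_1, \dots, s_p$; then $\degsuccs{c} = \degpreds{c} = p$ and each leaf has in- and out-degree $1$, the total number of arcs is $2p = m$, and $\Phi = p^2 + p = f(m)$. For odd $m = 2p+1$ keep this gadget and append one node $t$ with the single arc $c \to t$, so $\degsuccs{c} = p+1$, $\degpreds{c} = p$, the arc total is $2p+1 = m$, and $\Phi = (p+1)p + p = p^2 + 2p = f(m)$; reversing every arc gives the transpose root digraph, a second optimum, matching the abstract's count for odd $m$. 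Hence both bounds are attained.

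The main obstacle is the even case. The closure step deserves care, since it relies on the $Z$-in-$K_{2,2}$ condition being decided on a fixed quadruple of nodes and therefore surviving vertex deletion; without it the induction would not even be well posed. More substantially, the even-order degree lemma cannot, unlike Lemma~\ref{lem:maxlinedigraphedegre}, begin by selecting a reference node outside every $2$-circuit — for even $m$ the $2$-circuits may saturate the vertex set — so the interaction of $\preds{v}$, $\succs{v}$ and $\antineigh{v}$ must be reanalysed while allowing $v$ itself to close a $2$-circuit, and confirming that this extra possibility still leaves a usable node $x$ is the delicate point of the whole argument.
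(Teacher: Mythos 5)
Your proposal is correct, and its odd case, base cases, and tightness argument track the paper; but your even case takes a genuinely different route. The paper proves no even-order degree lemma at all: it applies the induction hypothesis to every single-node deletion $H-v$ (each of odd order $m-1$, hence with at most $B_m = \left(\frac{m-2}{2}\right)^2 + m-2$ arcs) and closes the even case by an averaging argument, phrased as LP duality — summing the constraint over all $m$ choices of $v$ counts each arc exactly $m-2$ times, so $(m-2)\,|E(H)| \leq m B_m$, which is exactly $|E(H)| \leq \left(\frac{m}{2}\right)^2 + \frac{m}{2}$. You instead prove a companion degree lemma for even order (some node of degree at most $\frac{m}{2}+1$) and run a uniform deletion induction in both parities. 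Your lemma is true, your arithmetic closes exactly as you say, and the "delicate point" you flag at the end is in fact not delicate: your own counting already resolves it, since at most $|\antineigh{v}|+1 \leq \frac{m}{2}-1$ nodes of $\preds{v} \cup \succs{v}$ can lie in $2$-circuits (the $+1$ correctly absorbing the possible $2$-circuit partner of $v$, which Shortcuts isolate from the rest of the neighborhood), while $|\preds{v} \cup \succs{v}| \geq \frac{m}{2}+1$, so the node $x$ exists; and then the endgame is easier than in Lemma~\ref{lem:maxlinedigraphedegre}, not harder, because the Shortcut/Deviation analysis confines the neighbors of $x$ to $\{v\}$, at most one node $y$, and $\antineigh{v}$ — at most $\frac{m}{2}$ nodes in all — whereas $x$, lying in no $2$-circuit with degree at least $\frac{m}{2}+2$, needs $\frac{m}{2}+2$ distinct neighbors; no third node $z$ or orientation case analysis is required. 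The trade-off: the paper's averaging step needs only the single odd-order lemma and is shorter, while yours is more elementary (no LP) and symmetric between parities, at the price of a second structural lemma. Two further points in your favor: proving tightness on the root side by computing $\Phi(O_m)$ is cleaner than the paper's direct verification of the forbidden-subgraph characterization on the extremal line digraphs, and your explicit check that node deletion preserves the class (which the paper uses silently when invoking the induction hypothesis on $H-v$) is a worthwhile addition, though it follows even more quickly from the identity $L(G) - a = L(G - a)$ for an arc $a$ of the root digraph $G$.
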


    \begin{proof}
        We prove Theorem~\ref{the:maxlinedigraphe} by induction on $m$, initialization for digraphs of order 6 or less is available 
        in the annexe.
        
        If $m$ is odd, by Lemma~\ref{lem:maxlinedigraphedegre}, there exists one node $v$ of degree at most $\frac{m - 1}{2}$. Let $H' = H - v$. By induction on the digraph order, one can show that there are at most $\left(\frac{m - 1}{2}\right)^2 + \frac{m - 1}{2}$ arcs in $H'$. As none of those arcs are incident to $v$, then there are at most $\left(\frac{m - 1}{2}\right)^2 + \frac{m - 1}{2} + \frac{m - 1}{2}$ arcs in $H$. So the theorem is true in the odd case.

        If $m$ is even. By induction on the digraph order, for every node $v$ of $H$, There are at most $B_m = \left(\frac{m - 2}{2}\right)^2 + m - 2$ arcs in the digraph $H - v$. We consider the following linear program where $x_{ij}$ is a binary variable representing the presence of the arc $(i, j)$ in the digraph. Note that no constraint is given on the fact that $H$ is a line digraph. Consequently, the program gives an upper bound on the number of arcs. We also provide the dual of the program on the right. 

        \begin{minipage}{0.49\linewidth}
            \begin{align*}
                \mathclap{\max \sum\limits_{i=1}^{m}\sum\limits_{\substack{j=1\\j \neq i}}^{m} x_{ij}^{}}\\
                \sum\limits_{\substack{j=1\\j\neq i}}^{m}\sum\limits_{\substack{k=1\\k \neq i\\k \neq j}}^{m}X_{jk}^{} &\leq B_m & \forall i \in \llbracket 1; m \rrbracket\\
                x_{ij} &\in \{0, 1\} & \forall i \neq j \in \llbracket 1; m \rrbracket
            \end{align*}
        \end{minipage}
        \begin{minipage}{0.49\linewidth}
            \begin{align*}
                \mathclap{\min \sum\limits_{\substack{i = 1\\~}}^m B_m R_i} \\
                \sum\limits_{\substack{k \neq i\\k \neq j\\~}} R_k &\leq 1 & \forall i \neq j \in \llbracket 1; m \rrbracket\\
                R_{i} &\geq 0 & \forall i \in \llbracket 1; m \rrbracket
            \end{align*}
        \end{minipage}

        A feasible primal solution for the linear relaxation consists in setting $x_{ij}$ to 
        $\frac{\left(\frac{m - 2}{2}\right)^2 + m - 2}{(m-1)(m-2)}$
        for all $i$ and $j$. A feasible dual solution consists in setting $R_i = 1/(m-2)$ for all $i$. In the two cases, we get the following objective value
        \begin{align*}
            \left( \left(\frac{m - 2}{2}\right)^2 + m - 2 \right) \cdot \frac{m}{m-2} =\left(\frac{(m - 2) \cdot m}{4}\right) + m = \left(\frac{m}{2}\right)^2 + \frac{m}{2}
        \end{align*}

        We also get the desired upper bound. The two upper bounds are tight as we can achieve them with the digraphs of Figure~\ref{fig:maxlinedigraphe}.

        \begin{figure}[h]
            \centering
            \begin{subfigure}[b]{0.3\textwidth}
                \centering
                \includegraphics[width=0.4\textwidth]{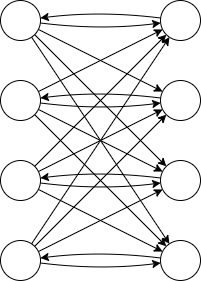}
            \end{subfigure}
            \begin{subfigure}[b]{0.3\textwidth}
                \centering
                \includegraphics[width=0.4\textwidth]{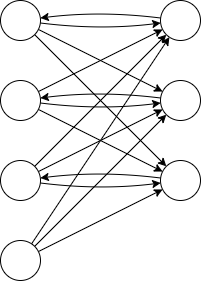}
            \end{subfigure}
            \begin{subfigure}[b]{0.3\textwidth}
                \centering
                \includegraphics[width=0.4\textwidth]{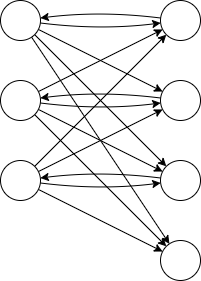}
            \end{subfigure}
            \captionsetup{justification=centering}
            \caption{Example of a line digraph with maximum number of arcs. This is a complete oriented bipartite with a maximum number of return arcs, and an additional node in the complete bipartite for the odd case.}
            \label{fig:maxlinedigraphe}
        \end{figure}

        Let see that these digraphs are line-digraphs. Since being bipartite, there is no triangle in the undirected graph underlying, and so there is no Shortcut. Each node $v$ has $\degpreds{v} = 1$ or $\degsuccs{v} = 1$, so there are no Eights either. In a Deviation, there is a node $u$ with $\degsuccs{u} > 1$, a node $v$ with $\degpreds{v} > 1$, and two paths of size two from $u$ to $v$. Since this is bipartite, the paths from a node $u$ with an $\degsuccs{u} > 1$ to $v$ with $\degpreds{v} > 1$ are all odd, so there is no Deviation. All nodes with an outgoing degree strictly greater than 1 have the same successors, and all nodes with an ingoing degree strictly greater than 1 have the same predecessors, so there is no $Z$ outside a $K_{2,2}$. This concludes the proof of Theorem~\ref{the:maxlinedigraphe}.
    \end{proof}

\section{Unicity of maximum line digraphs}
\label{sec:unicity}
    Now that we know the maximum number of arcs, let us show that the only digraphs reaching this value are those shown in Figure~\ref{fig:maxlinedigraphe}.
    Since $\Phi (G)$ is the number of arcs in $L(G)$, as a consequence of Theorem~\ref{the:maxlinedigraphe}
    \begin{corollary}
    \label{cor:maxlinedigraphe}
        Let $G$ a digraph with $m$ arcs, $\Phi(G) = \sum_{v \in V} \degsuccs{v}\cdot \degpreds{v} \leq \left(\frac{m}{2}\right)^2 + \frac{m}{2}$ arcs if $m$ is even, and $\Phi(G) \leq \left(\frac{m - 1}{2}\right)^2 + m - 1$ arcs otherwise.
    \end{corollary}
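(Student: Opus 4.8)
The plan is to observe that the corollary is a direct restatement of Theorem~\ref{the:maxlinedigraphe}, obtained through the identity already emphasized in the introduction: $\Phi(G)$ counts precisely the number of arcs of the line digraph $L(G)$. So rather than reprove anything, I would simply make this counting identity explicit and then invoke the theorem.

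First I would fix a digraph $G = (V, A)$ with $|A| = m$ arcs and recall the construction of $L(G) = (A, E)$. For a node $v \in V$, an arc $(a, b) \in E$ such that $a$ terminates at $v$ and $b$ originates at $v$ is specified by choosing any of the $\degpreds{v}$ arcs entering $v$ to play the role of $a$ and any of the $\degsuccs{v}$ arcs leaving $v$ to play the role of $b$; conversely, every arc of $L(G)$ arises in this way for exactly one intermediate node $v$. Counting these $\degpreds{v}\cdot\degsuccs{v}$ choices and summing over all $v$ yields $|E| = \sum_{v \in V} \degsuccs{v}\cdot\degpreds{v} = \Phi(G)$.

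Next, since the node set of $L(G)$ is $A$, the digraph $L(G)$ has exactly $m$ nodes, and it is by construction a line digraph. Hence $L(G)$ is a line digraph of order $m$ in the sense of Theorem~\ref{the:maxlinedigraphe}. Applying that theorem to $H = L(G)$ bounds $|E|$ by $\left(\frac{m}{2}\right)^2 + \frac{m}{2}$ when $m$ is even and by $\left(\frac{m-1}{2}\right)^2 + m - 1$ when $m$ is odd, which are exactly the claimed bounds on $\Phi(G)$.

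There is essentially no obstacle here, since all the combinatorial work has already been carried out in the proof of Theorem~\ref{the:maxlinedigraphe}. The only thing requiring care is the bookkeeping identity $\Phi(G) = |E(L(G))|$, which is immediate from the definition of the line digraph transformation. The single point worth stating explicitly is that $L(G)$ is automatically a line digraph (it is the line digraph of $G$), so the order-$m$ bound of the theorem genuinely applies to it.
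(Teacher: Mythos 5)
Your proposal is correct and follows exactly the paper's own route: the paper derives the corollary in one line by noting that $\Phi(G)$ equals the number of arcs of $L(G)$, which is a line digraph of order $m$, and then applying Theorem~\ref{the:maxlinedigraphe}. Your write-up merely makes the counting identity $\Phi(G) = |E(L(G))|$ explicit, which is a fine (and harmless) elaboration of the same argument.
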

    Let $O_m$ be the digraph with a central node $w$, $\lfloor m / 2 \rfloor$ 2-length circuit from the central node and, if $m$ is odd, a final incoming arc on $w$ as shown in Figure~\ref{fig:bigroot:1}.

    \begin{figure}[h]
        \centering
        \begin{subfigure}[b]{0.3\textwidth}
            \centering
            \stackinset{c}{0\textwidth}{c}{0\textwidth}{w}{
                \includegraphics[width=0.7\textwidth]{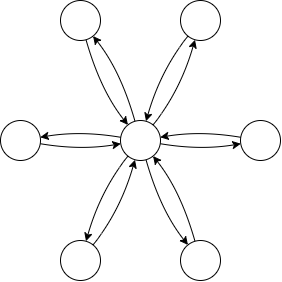}
            }
            \caption{$O_m$ $m$ even}
        \end{subfigure}
        \begin{subfigure}[b]{0.3\textwidth}
            \centering
            \stackinset{c}{0\textwidth}{c}{0\textwidth}{w}{
                \includegraphics[width=0.7\textwidth]{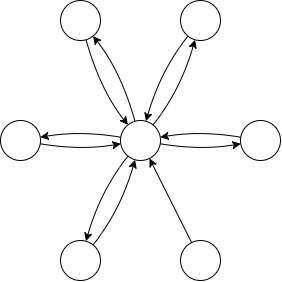}
            }
            \caption{$O_m$ $m$ odd}
        \end{subfigure}
        \begin{subfigure}[b]{0.3\textwidth}
            \centering
            \stackinset{c}{0\textwidth}{c}{0\textwidth}{w}{
                \includegraphics[width=0.7\textwidth]{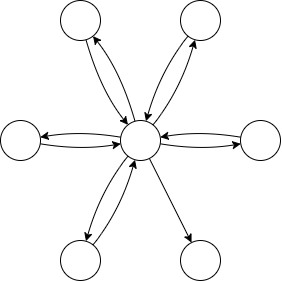}
            }
            \caption{$O_m^T$ $m$ odd}
        \end{subfigure}
        \captionsetup{justification=centering}
        \caption{Optimal digraphs for $m = 12$ and $m = 11$. The one on the left verifies $\Phi (G) = 6*6 + (1 * 1) \cdot 6 = $42. Those on the right verify $\Phi (G) = 5 * 6 + (1 * 1) \cdot 5 + 1 * 0= 35$}
        \label{fig:bigroot:1}
    \end{figure}
    These are the root digraphs of the line digraphs shown in Figure~\ref{fig:maxlinedigraphe}.
    For theses digraphs, by Theorem~\ref{the:maxlinedigraphe}, $\Phi (O_m) = \frac{(m - 1)^2}{4}+ m - 1$ if $m$ is odd and $\Phi (O_m) = \Phi (O_m^T) = \frac{m^2}{4}+ \frac{m}{2}$ if it's even. Let us show that every optimal digraph with $m$ arcs is isomorphic to $O_m$ or $O_m^T$. Note that $O_m = O_m^T$ if $m$ is even.

    \begin{lemma}
    \label{lem:bigroot:11}
        If $G$ is optimal, then, for any arc $(u, v) \in A$,
        $$\degpreds{u} + \degsuccs{v} \geq \begin{cases}
            \frac{m}{2} + 1 & \text{if $m$ is even}\\
            \frac{m - 1}{2} & \text{if $m$ is odd}
        \end{cases}$$
    \end{lemma}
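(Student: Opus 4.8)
The plan is to prove the lemma by a one-step arc-removal argument combined with the global upper bound of Corollary~\ref{cor:maxlinedigraphe}, rather than by any induction or local exchange. The starting observation is that deleting a single arc changes $\Phi$ by a controlled amount. Fix an arc $(u,v) \in A$. Since $G$ is simple we have $u \neq v$, and removing $(u,v)$ lowers only $\degsuccs{u}$ and $\degpreds{v}$, each by one, leaving every other in/out-degree untouched. Expanding the two affected terms of $\Phi$ gives
\[
\Phi(G) - \Phi\big(G - (u,v)\big) = \degpreds{u} + \degsuccs{v}.
\]
Thus bounding $\degpreds{u}+\degsuccs{v}$ from below is the same as bounding $\Phi(G-(u,v))$ from above, because $\Phi(G)$ is pinned down exactly by optimality (it equals $\Phi(O_m)$, computed just above the lemma).

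Next I would apply Corollary~\ref{cor:maxlinedigraphe} to $G - (u,v)$, which has exactly $m-1$ arcs. The only bit of care is the parity swap between $m$ and $m-1$. When $m$ is even, $m-1$ is odd, so the corollary yields $\Phi(G-(u,v)) \leq \left(\frac{m-2}{2}\right)^2 + m - 2$; when $m$ is odd, $m-1$ is even, giving $\Phi(G-(u,v)) \leq \left(\frac{m-1}{2}\right)^2 + \frac{m-1}{2}$.

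Finally, substituting the exact value of $\Phi(G)$ from optimality and subtracting settles each case. In the even case,
\[
\degpreds{u}+\degsuccs{v} \;\geq\; \left(\tfrac{m}{2}\right)^2 + \tfrac{m}{2} - \left(\tfrac{m-2}{2}\right)^2 - (m-2) \;=\; \tfrac{m}{2} + 1,
\]
and in the odd case,
\[
\degpreds{u}+\degsuccs{v} \;\geq\; \left(\tfrac{m-1}{2}\right)^2 + m - 1 - \left(\tfrac{m-1}{2}\right)^2 - \tfrac{m-1}{2} \;=\; \tfrac{m-1}{2},
\]
which are exactly the two claimed bounds.

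I do not expect a genuine obstacle: the argument is a single exchange, not an induction, and the whole content sits in the displayed identity above. The points needing care are purely bookkeeping — that arc deletion lowers $\Phi$ by precisely $\degpreds{u}+\degsuccs{v}$ (which relies on simplicity, so that $(u,v)$ is not a loop and no parallel copy survives), and the parity shift that makes the even and odd bounds look asymmetric. As a consistency check, summing the displayed identity over all arcs gives $\sum_{(u,v)\in A}\big(\degpreds{u}+\degsuccs{v}\big) = 2\Phi(G)$, so the mean of $\degpreds{u}+\degsuccs{v}$ over the $m$ arcs is $2\Phi(G)/m$, which in the even case equals $\frac{m}{2}+1$ exactly; this confirms the bound is tight and, in the even case, forces equality on every arc.
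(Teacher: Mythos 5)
Your proposal is correct and is essentially the paper's own proof: both delete the arc $(u,v)$, use the identity $\Phi(G) - \Phi(G-(u,v)) = \degpreds{u} + \degsuccs{v}$, bound $\Phi(G-(u,v))$ by the maximum for $m-1$ arcs (with the parity swap), and subtract from the exact optimal value of $\Phi(G)$. The only cosmetic difference is that you cite Corollary~\ref{cor:maxlinedigraphe} where the paper cites Theorem~\ref{the:maxlinedigraphe} directly, and your closing averaging check (that the even-case bound holds with equality on every arc) is a nice extra observation not present in the paper.
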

    \begin{proof}
        Consider the digraph $F = G - (u, v)$. Then $\Phi (F) = \Phi (G) - (\degpreds{u} + \degsuccs{v})$. There are $m - 1$ arcs in the digraph $F$. Suppose $m$ is even.
        By Theorem~\ref{the:maxlinedigraphe}:
        \begin{align*}
            \Phi (G) &= \frac{m^2}{4} + \frac{m}{2}\quad\text{and}\quad \Phi (F) \leq \frac{(m - 2)^2}{4} + m - 2\\
            \text{So}\qquad\degpreds{u} + \degsuccs{v} &\geq \frac{m^2}{4} + \frac{m}{2} - \left(\frac{(m - 2)^2}{4} + m - 2\right) \geq \frac{m}{2} + 1
        \end{align*}
        The same applies if $m$ is odd, by Theorem~\ref{the:maxlinedigraphe}:
        \begin{align*}
            \Phi (G) &= \frac{(m - 1)^2}{4} + m - 1 \quad\text{and}\quad \Phi (F) \leq \frac{(m - 1)^2}{4} + \frac{m - 1}{2}\\
            \text{So}\qquad \degpreds{u} + \degsuccs{v} &\geq \frac{(m - 1)^2}{4} + m - 1 - \left(\frac{(m - 1)^2}{4} + \frac{m - 1}{2}\right) \geq \frac{m - 1}{2}\qedhere
        \end{align*}
    \end{proof}
    \begin{corollary}
    \label{cor:bigroot:19}
        If $G$ is optimal, for any pair of arcs $(u, v)$ and $(x, y)$ then, if $F = G - (u, v)$, $\degpreds[G]{u} + \degsuccs[G]{v} + \degpreds[F]{x}+ \degsuccs[F]{y} \geq m$
    \end{corollary}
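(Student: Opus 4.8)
The plan is to remove \emph{both} arcs from $G$ in succession and read off the resulting drop in $\Phi$ using the single-arc identity already exploited in the proof of Lemma~\ref{lem:bigroot:11}, then bound the remaining digraph (which has $m-2$ arcs) by Corollary~\ref{cor:maxlinedigraphe}. Concretely, I would set $H = F - (x,y) = G - (u,v) - (x,y)$. This is well defined and has exactly $m-2$ arcs provided $(u,v)$ and $(x,y)$ are distinct arcs of $G$, so that $(x,y)$ still belongs to $F$; I would state this distinctness as the intended reading of ``pair of arcs''. Recall that removing a single arc $(a,b)$ from a digraph $G'$ lowers the out-degree of $a$ and the in-degree of $b$ each by one, so $\Phi(G' - (a,b)) = \Phi(G') - \degpreds[G']{a} - \degsuccs[G']{b}$. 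Applying this twice, first to remove $(u,v)$ from $G$ and then to remove $(x,y)$ from $F$, yields
\begin{align*}
    \Phi(G) - \Phi(H) = \degpreds[G]{u} + \degsuccs[G]{v} + \degpreds[F]{x} + \degsuccs[F]{y}.
\end{align*}
This equality is insensitive to any coincidences among the endpoints $u,v,x,y$, since each removal touches exactly one out-degree and one in-degree.

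Next I would convert this equality into the stated inequality. Since $G$ is optimal, $\Phi(G)$ equals the maximum value given by Theorem~\ref{the:maxlinedigraphe}, while Corollary~\ref{cor:maxlinedigraphe} bounds $\Phi(H)$ from above by the corresponding maximum for a digraph with $m-2$ arcs. Writing $\mathrm{opt}(k)$ for the bound of Theorem~\ref{the:maxlinedigraphe} on $k$ arcs, this gives
\begin{align*}
    \degpreds[G]{u} + \degsuccs[G]{v} + \degpreds[F]{x} + \degsuccs[F]{y} = \Phi(G) - \Phi(H) \geq \mathrm{opt}(m) - \mathrm{opt}(m-2).
\end{align*}
A short computation, done separately when $m$ and $m-2$ are both even and when both are odd, shows $\mathrm{opt}(m) - \mathrm{opt}(m-2) = m$ in either parity, which is exactly the desired bound.

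The argument is essentially a two-arc version of Lemma~\ref{lem:bigroot:11}, so there is no serious obstacle; the only point that genuinely needs verifying is that the telescoping difference $\mathrm{opt}(m) - \mathrm{opt}(m-2)$ collapses to \emph{exactly} $m$ rather than to a nearby quantity, and in particular that it is the same value for both parities (despite $\mathrm{opt}$ itself having different even and odd forms). This parity-independence is what makes the resulting bound clean and uniform, and it is the one calculation I would carry out in full.
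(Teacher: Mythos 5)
Your proof is correct, and while it rests on the same two ingredients as the paper's (the arc-removal identity for $\Phi$ and the extremal bounds of Theorem~\ref{the:maxlinedigraphe}), its logical structure is genuinely different --- and in fact more solid. The paper's proof is one line: apply Lemma~\ref{lem:bigroot:11} to $(u,v)$ in $G$, then to $(x,y)$ in $F$. Read literally, that second step is unjustified, because Lemma~\ref{lem:bigroot:11} assumes optimality of the digraph it is applied to, and $F = G - (u,v)$ need not be optimal even when $G$ is: for odd $m$, deleting an arc of one of the 2-circuits of $O_m$ leaves $\Phi$ strictly below the maximum achievable with $m-1$ arcs. Your telescoping through $H = F - (x,y)$ sidesteps this entirely: you use only the exact value $\Phi(G) = \mathrm{opt}(m)$ (optimality of $G$ together with Corollary~\ref{cor:maxlinedigraphe}) and the upper bound $\Phi(H) \le \mathrm{opt}(m-2)$, and the unknown intermediate quantity $\Phi(F)$ cancels, so no optimality claim about $F$ is ever needed. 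Your parity computation $\mathrm{opt}(m) - \mathrm{opt}(m-2) = m$ is right in both cases (it matches what one gets by summing the two bounds of Lemma~\ref{lem:bigroot:11} across opposite parities). Finally, the distinctness caveat you flag is not pedantry: the statement genuinely needs $(u,v) \neq (x,y)$ --- the paper needs it too, since otherwise $(x,y)$ is not an arc of $F$ --- and for odd $m$ the inequality can actually fail when the two arcs coincide: in $O_m$, taking both arcs to be the extra source arc into the center gives $2\left(0 + \frac{m-1}{2}\right) = m-1 < m$.
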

    \begin{proof}
        This is an application of Lemma~\ref{lem:bigroot:11} first with arc $(u, v)$ in digraph $G$ and then with arc $(x, y)$ in digraph $F$.
    \end{proof}

    Lemmas~\ref{lem:bigroot:18} to \ref{lem:bigroot:20} show that if $G$ is optimal, it must contain a 2-length circuit.

        \begin{lemma}
        \label{lem:bigroot:18}
            If $G$ is optimal, with no 2-length circuit, then, for any pair of consecutive arcs $(u, v)$ and $(v, x)$ we have $(x, u) \in A$.
        \end{lemma}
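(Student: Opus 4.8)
The plan is to argue by contradiction, using the optimality of $G$ as an exchange engine: since $G$ maximises $\Phi$ over all $m$-arc digraphs, no arc replacement keeping the arc count equal to $m$ can increase $\Phi$. First I would record the elementary bookkeeping of the moves I use. Removing an arc $(a,b)$ decreases $\Phi$ by $\degpreds{a} + \degsuccs{b}$, and adding an arc $(a,b)$ increases it by $\degpreds{a} + \degsuccs{b}$ (degrees read before the move), with an obvious interaction term when the two arcs share an endpoint. Assume for contradiction that $(x,u) \notin A$. Because $G$ has no $2$-circuit and $(u,v),(v,x) \in A$, we also get $(v,u) \notin A$, $(x,v) \notin A$ and $u \neq x$, so every replacement below yields a simple digraph on the same $m$ arcs.

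Then I would extract structural inequalities from \emph{endpoint reroutings}, whose net effect is a single degree difference. Rerouting an incoming arc of $u$ to start at $x$, i.e. $G - (s,u) + (x,u)$ for a predecessor $s$ of $u$, changes $\Phi$ by $\degpreds{x} - \degpreds{s}$, so optimality forces $\degpreds{x} \le \degpreds{s}$ for every predecessor $s$ of $u$; symmetrically, rerouting an outgoing arc of $x$ to end at $u$ changes $\Phi$ by $\degsuccs{u} - \degsuccs{t}$, forcing $\degsuccs{u} \le \degsuccs{t}$ for every successor $t$ of $x$. The same idea at $v$ (legal precisely because there is no $2$-circuit, so $(v,u),(x,v)\notin A$) compares the successor $x$ with the non-successor $u$ and the predecessor $u$ with the non-predecessor $x$, giving $\degsuccs{u} \le \degsuccs{x}$ and $\degpreds{x} \le \degpreds{u}$.

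Next I would write the two replacements that actually complete the triangle, $G - (u,v) + (x,u)$ and $G - (v,x) + (x,u)$. A direct computation gives
\begin{align*}
\Phi(G - (u,v) + (x,u)) - \Phi(G) &= (\degpreds{x} + \degsuccs{u}) - (\degpreds{u} + \degsuccs{v}) - 1,\\
\Phi(G - (v,x) + (x,u)) - \Phi(G) &= (\degpreds{x} + \degsuccs{u}) - (\degpreds{v} + \degsuccs{x}) - 1,
\end{align*}
both of which must be $\le 0$. The aim is to contradict one of them by forcing $\degpreds{x} + \degsuccs{u}$ to be large: Lemma~\ref{lem:bigroot:11} applied to an out-arc of $x$ and an in-arc of $u$ (when present) lower-bounds $\degpreds{x}$ and $\degsuccs{u}$, while Corollary~\ref{cor:bigroot:19} on the pair $(u,v),(v,x)$ pins $\degpreds{u} + \degsuccs{v} + \degpreds{v} + \degsuccs{x}$ from below, and the goal is to play these global bounds against the local ones.

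The main obstacle is exactly this last step. Each local move is, on its own, consistent with $(x,u)$ being absent: the reroutings at $v$ give $\degsuccs{u} \le \degsuccs{x}$ and $\degpreds{x} \le \degpreds{u}$, hence $\degpreds{x} + \degsuccs{u} \le \degpreds{u} + \degsuccs{x}$, which makes both displayed differences comfortably nonpositive. So no single replacement yields the contradiction; the proof must combine optimality with the \emph{exact} extremal value of $\Phi$ (through Lemma~\ref{lem:bigroot:11} and Corollary~\ref{cor:bigroot:19}) in a genuinely global way rather than through any one swap. Isolating the one modification whose improvement is forced — possibly touching more than two arcs, and necessarily using the absence of $2$-circuits to remain simple — is where the real work lies.
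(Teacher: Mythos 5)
Your proposal does not close: you end by conceding that no single exchange yields a contradiction and that ``the real work'' of isolating a forcing modification remains to be done. That concession is accurate — the reroutings you list are, as you yourself compute, all consistent with $(x,u)\notin A$ — so what you have is a collection of necessary conditions, not a proof. The gap is that you try to contradict optimality with a \emph{second local move}, whereas the intended use of Corollary~\ref{cor:bigroot:19} is not as fuel for another exchange but as a lower bound to be played against a purely combinatorial counting upper bound.

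Here is the missing idea, which is the paper's entire argument. Apply Corollary~\ref{cor:bigroot:19} exactly as you propose, to $(u,v)$ in $G$ and $(v,x)$ in $F = G-(u,v)$: since there is no 2-circuit, $x \neq u$ and $\degsuccs[F]{x} = \degsuccs[G]{x}$, while $\degpreds[F]{v} = \degpreds[G]{v}-1$, so
\[
\degpreds[G]{u} + \degsuccs[G]{v} + \degpreds[G]{v} + \degsuccs[G]{x} \geq m+1 .
\]
Now double count: each arc $(s,t)\in A$ contributes $\alpha^{s,t} = \mathbbm{1}_{t=u} + \mathbbm{1}_{s=v} + \mathbbm{1}_{t=v} + \mathbbm{1}_{s=x}$ to the left-hand side, and summing $\alpha^{s,t}$ over the $m$ arcs gives exactly that sum. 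Since $G$ is simple and loopless, $\alpha^{s,t} \leq 2$, and $\alpha^{s,t} = 2$ is possible only for the three arcs $(v,u)$, $(x,v)$ and $(x,u)$. The first two are excluded by the no-2-circuit hypothesis; so if $(x,u)\notin A$, every arc contributes at most $1$ and the left-hand side is at most $m$, contradicting the display. Hence $(x,u)\in A$. Note the contrast with your plan: no exchange beyond the single deletion already hidden inside Corollary~\ref{cor:bigroot:19} is needed, and the absence of 2-circuits is used to eliminate the two competing weight-2 arcs, not to legalize further swaps.
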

        \begin{proof}
            Let $F = G - (u, v)$. Then $\degpreds[F]{v} = \degpreds[G]{v} - 1$. Since there is no 2-length circuit in $G$, then $x \neq u$ and $\degsuccs[F]{x} = \degsuccs[G]{x}$. 
            By Corollary~\ref{cor:bigroot:19}, $\beta = \degpreds[G]{u} + \degsuccs[G]{v} + \degpreds[G]{v} + \degsuccs[G]{x} - 1 \geq m$.
            For all $(s,t)\in A$, if $\alpha^{s,t} = \mathbbm{1}_{u=t} + \mathbbm{1}_{v=s} + \mathbbm{1}_{v=t} + \mathbbm{1}_{x=s} \leq 1$ then $\beta = \sum\nolimits_{(s,t)\in A} \alpha^{s,t} - 1 < m$, contradiction. Note that $\alpha^{s,t} \leq 2$ because $u \neq v$ and $v \neq x$. So $\exists (s,t)\in A$, $\alpha^{s,t} = 2$. There are three cases for arc $(s, t)$ : the arc $(v,u)$ or the arc $(x,v)$, which are excluded since there is no 2-length circuit in $G$, the last option being the arc $(x,u)$, which is the desired result.
        \end{proof}

        \begin{lemma}
        \label{lem:bigroot:17}
            If $G$ is optimal with no 2-length circuit, then every pair of arcs are incident on a same node.
        \end{lemma}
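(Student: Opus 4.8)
The plan is to prove the contrapositive in spirit: assuming $G$ is optimal with no 2-length circuit, I want to show any two arcs share a common endpoint. Lemma~\ref{lem:bigroot:18} already gives a strong structural tool: whenever two arcs are consecutive, say $(u,v)$ and $(v,x)$, the closing arc $(x,u)$ must be present, so $G$ forces every directed path of length two into a 3-length circuit (triangle). The strategy is to exploit this triangle-closing property together with the degree-sum bound of Corollary~\ref{cor:bigroot:19} to rule out the existence of two vertex-disjoint arcs.

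First I would set up the argument by contradiction: suppose $(u,v)$ and $(x,y)$ are two arcs with $\{u,v\}\cap\{x,y\}=\emptyset$. The natural first step is to delete one arc and apply Corollary~\ref{cor:bigroot:19}, obtaining $\degpreds[G]{u}+\degsuccs[G]{v}+\degpreds[F]{x}+\degsuccs[G]{y}\ge m$ where $F = G-(u,v)$; since the four endpoints are distinct, deleting $(u,v)$ does not touch the degrees $\degpreds{x}$ or $\degsuccs{y}$, so in fact $\degpreds[G]{u}+\degsuccs[G]{v}+\degpreds[G]{x}+\degsuccs[G]{y}\ge m+1$. This says the four ``half-degrees'' around two disjoint arcs are forced to be large, which should be incompatible with the global arc count $m$ once the triangle-closing constraint of Lemma~\ref{lem:bigroot:18} restricts how these neighborhoods can overlap. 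The idea is to bound the number of arcs counted by these degree terms via an indicator argument like the one in Lemma~\ref{lem:bigroot:18}: each arc $(s,t)$ contributes $\mathbbm{1}_{t=u}+\mathbbm{1}_{s=v}+\mathbbm{1}_{t=x}+\mathbbm{1}_{s=y}$, and because no 2-length circuit exists, certain coincidences are forbidden, so most arcs contribute at most $1$, forcing the degree sum below $m+1$ unless many arcs contribute $2$.

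The heart of the proof is then to show the arcs contributing $2$ to that indicator sum force the kind of common-endpoint structure we want, or else produce a configuration contradicting the no-2-circuit hypothesis or Lemma~\ref{lem:bigroot:18}. An arc contributes $2$ precisely when it realizes two of the four incidence conditions simultaneously, for instance an arc $(y,u)$, $(y,x)$ (impossible, would need $x$ as both head of $(x,y)$ and $(y,x)$, a 2-circuit), $(v,u)$ (a 2-circuit, excluded), or $(v,x)$. So the only admissible ``double-counting'' arcs are $(y,u)$ and $(v,x)$, and I would analyze what their presence entails: if $(v,x)$ is present then $(u,v),(v,x)$ are consecutive, so by Lemma~\ref{lem:bigroot:18} the arc $(x,u)$ exists, which already makes $(u,v)$ and $(x,y)$ both incident to the triangle on $\{u,v,x\}$ — and a parallel argument handles $(y,u)$. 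Chasing these forced arcs should collapse the assumed disjointness.

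The step I expect to be the main obstacle is making the counting argument tight enough that it genuinely forces a contradiction rather than merely a near-miss: the bound $\ge m+1$ leaves very little slack, so I must account carefully for which arcs can contribute $2$ and verify that the total number of distinct arcs available is strictly less than what the degree sum demands. In particular I would need to check that the at most two ``double'' arcs $(y,u)$ and $(v,x)$ cannot both be present without either creating an excluded 2-circuit or, via Lemma~\ref{lem:bigroot:18}, generating a closing arc that contradicts disjointness. The bookkeeping of the indicator sum — isolating exactly the arcs with $\alpha^{s,t}=2$ and showing each forces $(u,v)$ and $(x,y)$ to meet — is where the argument must be executed with care, mirroring but extending the technique already used in the proof of Lemma~\ref{lem:bigroot:18}.
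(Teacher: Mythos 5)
Your proposal follows the same route as the paper's own proof: assume two disjoint arcs $(u,v)$ and $(x,y)$, apply Corollary~\ref{cor:bigroot:19}, count the contribution of each arc via the indicator $\alpha^{s,t}=\mathbbm{1}_{u=t}+\mathbbm{1}_{v=s}+\mathbbm{1}_{x=t}+\mathbbm{1}_{y=s}$, observe that the only arcs that can contribute $2$ without creating a 2-length circuit are $(v,x)$ and $(y,u)$, and invoke Lemma~\ref{lem:bigroot:18}. However, there is a genuine error at the start: the strengthening to $\degpreds[G]{u}+\degsuccs[G]{v}+\degpreds[G]{x}+\degsuccs[G]{y}\geq m+1$ is unjustified. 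Corollary~\ref{cor:bigroot:19} gives $\degpreds[G]{u}+\degsuccs[G]{v}+\degpreds[F]{x}+\degsuccs[F]{y}\geq m$, and since the four endpoints are distinct the $F$-degrees coincide with the $G$-degrees, so the inequality you actually have is $\geq m$; substituting equal quantities cannot add $1$. (Nor can the $+1$ be recovered by applying Lemma~\ref{lem:bigroot:11} twice directly in $G$: that gives $m+2$ when $m$ is even but only $m-1$ when $m$ is odd, which is precisely why the corollary deletes an arc first.) This error is not cosmetic, because your intended contradiction is purely arithmetic --- too few arcs can contribute $2$ to reach the demanded total --- and that argument needs $m+1$. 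With the correct bound $m$, the count is attainable: both $(v,x)$ and $(y,u)$ present and every arc other than $(u,v)$ and $(x,y)$ contributing exactly $1$ sums to precisely $m$, so counting alone closes nothing.

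The second gap is that the structural finish is only gestured at, and the version you do state is not a contradiction: the presence of $(v,x)$ and, via Lemma~\ref{lem:bigroot:18}, of $(x,u)$, ``making $(u,v)$ and $(x,y)$ both incident to a triangle,'' contradicts neither the disjointness assumption nor the absence of 2-length circuits; a single double arc together with its closing arc is a perfectly consistent configuration. What the counting actually forces, as in the paper, is that \emph{both} double arcs are present simultaneously: since $\alpha^{u,v}=\alpha^{x,y}=0$ and each arc contributes at most $2$, a total of at least $m$ accumulated over the remaining $m-2$ arcs requires at least two arcs with $\alpha=2$, and the only candidates are $(v,x)$ and $(y,u)$. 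This yields the circuit $(u,v,x,y,u)$, and then Lemma~\ref{lem:bigroot:18} applied twice --- to the consecutive pair $(u,v),(v,x)$ giving $(x,u)$, and to $(x,y),(y,u)$ giving $(u,x)$ --- produces a 2-length circuit through $u$ and $x$, the desired contradiction. Your proposal contains all the ingredients for this joint argument but, by analyzing the two double arcs separately, stops short of assembling them into an actual contradiction.
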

        \begin{proof}
            Suppose there are two arcs in $G$ with no common extremity, $(u, v)$ and $(x, y)$. Note that the degree of $x$ and $y$ in $G - (u, v)$ is unchanged. 
            By Corollary~\ref{cor:bigroot:19}, $\beta = \degpreds{u} + \degsuccs{v} + \degpreds{x} + \degsuccs{y} \geq m$.
            Let $\alpha^{s,t} = \mathbbm{1}_{u=t} + \mathbbm{1}_{v=s} + \mathbbm{1}_{x=t} + \mathbbm{1}_{y=s} \leq 2$. Since $\alpha^{u, v} = \alpha^{x, y} = 0$ and $\beta = \sum\nolimits_{(s,t)\in A} \alpha^{s,t}$ then $\exists (p,q), (s,t)\in A$, $\alpha^{p,q} = \alpha^{s,t} = 2$.
            There are four cases, the arcs $(v, u)$, $(y, x)$, $(v, x)$ and $(y, u)$. The first two cases are excluded because $G$ contains no 2-length circuit. So there is the circuit $(u, v, x, y, u)$ in $G$. 
            According to Lemma~\ref{lem:bigroot:18}, the arcs $(x, u)$ and $(u, x)$ are in $G$, which is excluded by hypothesis.
        \end{proof}

        \begin{lemma}
        \label{lem:bigroot:20}
            If $G$ is optimal, then there is at least one 2-length circuit in $G$.
        \end{lemma}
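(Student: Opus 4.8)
The plan is to argue by contradiction, leaning on the strong structural restrictions already established in Lemmas~\ref{lem:bigroot:17} and~\ref{lem:bigroot:18}. Assume $G$ is optimal and contains no $2$-length circuit. Since $G$ is simple and has no $2$-circuit, no pair of nodes is joined by both arcs, so each arc corresponds to a distinct unordered pair of its two endpoints. By Lemma~\ref{lem:bigroot:17}, any two arcs are incident on a common node, i.e.\ these $2$-element endpoint sets pairwise intersect. The first step is a Helly-type (sunflower) classification of such a family: either all arcs share one common node, or there are exactly three arcs whose endpoints form a triangle on three nodes. I would prove this by the standard argument: pick two arcs; they share a node $a$ and, having distinct endpoint pairs, have distinct other endpoints $b \neq c$, giving sets $\{a,b\}$ and $\{a,c\}$. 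If some further arc avoids $a$, it must meet each of these in $b$ and $c$ respectively, hence equals $\{b,c\}$, producing a triangle; one then checks no fourth $2$-set can meet all three, so the family is exactly that triangle.

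The triangle alternative yields exactly three arcs, i.e.\ $m = 3$, which cannot occur in our regime $m \geq 7$. Hence I may assume every arc is incident on a single common node $w$, so $G$ is an oriented star centred at $w$: every arc is of the form $(w, \ell)$ or $(\ell, w)$ for some leaf $\ell$. The key step is to show that $w$ is a pure source or a pure sink. Suppose instead $w$ had both an incoming arc $(u, w)$ and an outgoing arc $(w, x)$. Then $u \neq x$, since otherwise $(u,w)$ and $(w,u)$ would form a $2$-circuit; thus $(u, w)$ and $(w, x)$ are consecutive arcs, and Lemma~\ref{lem:bigroot:18} forces $(x, u) \in A$. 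But $x$ and $u$ are leaves, so the arc $(x, u)$ is incident on neither $w$, contradicting the star structure. Therefore all arcs at $w$ are oriented the same way.

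It remains to reach the contradiction. In a pure source (respectively pure sink) every node has out-degree $0$ or in-degree $0$, so $\degsuccs{v}\cdot\degpreds{v} = 0$ for every $v$, whence $\Phi(G) = 0$. But a single $2$-circuit already achieves $\Phi = 2$, and more generally $\Phi(O_m) > 0$ for every $m \geq 2$, so an optimal digraph cannot satisfy $\Phi(G) = 0$. This contradiction shows that $G$ must contain a $2$-length circuit. I expect the only delicate points to be the endpoint-set classification and, within the star case, the careful use of Lemma~\ref{lem:bigroot:18} to rule out a mixed orientation at $w$; the concluding observation that a source or sink forces $\Phi = 0$ is routine.
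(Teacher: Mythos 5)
Your proof is correct and follows essentially the same route as the paper's: Lemma~\ref{lem:bigroot:17} yields a pairwise-intersecting family of endpoint pairs, which is either a triangle (forcing $m=3$) or a star centred at some node $w$, and in the star case Lemma~\ref{lem:bigroot:18} forces $w$ to be a source or a sink, giving $\Phi(G)=0$ and contradicting optimality. The only divergence is at small $m$: you exclude the triangle case by assuming $m\geq 7$, whereas the paper assumes $m\geq 4$ and defers $m\in\{2,3\}$ to enumeration---and some such restriction is genuinely needed, since for $m=3$ the $3$-circuit attains the optimum $\Phi=3$ with no $2$-length circuit, so the lemma as literally stated fails there.
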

        \begin{proof}
            It is assumed that $m \geq 4$. The cases $m = 2$ and $m = 3$ can be handled by exhaustive enumeration. Assume that there is no symmetric arcs in $G$. By Lemma~\ref{lem:bigroot:17}, two arcs of $G$ have a common node. If there is a triangle $(u, v, w)$ in the undirected graph underlying $G$, any other arc must be incident on at least two nodes of the triangle to have a node in common with each arc. Since $m \geq 4$ then there is a 2-length circuit in $G$.

            If there is no triangle in the undirected graph underlying $G$, and all pairs of arcs have a node in common, then all arcs are incident to the same node $w$. If there is no 2-length circuit in $G$, by Lemma~\ref{lem:bigroot:18}, $w$ is either a \emph{source} ($\degpreds{w} = 0$) or a \emph{sink} ($\degsuccs{w} = 0$), but then $\Phi (G) = 0$ and $G$ is not optimal.
        \end{proof}

    Lemmas~\ref{lem:bigroot:16} to \ref{lem:bigroot:23}  show that if $G$ is optimal with $m \geq 7$ arcs and there is a 2-length circuit, then $G$ is isomorphic to $O_m$ or $O_m^T$ (Figure~\ref{fig:bigroot:1}).

    \begin{lemma}
    \label{lem:bigroot:16}
        If $G$ is optimal and there is a 2-length circuit $(u, v, u)$ then every arc is incident to $u$ or $v$.
    \end{lemma}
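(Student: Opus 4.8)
The plan is to derive a single lower bound on $\degree{u} + \degree{v}$ from optimality and then combine it with a double count of the arcs according to how they meet the set $\{u,v\}$. The target inequality will be $\degree{u} + \degree{v} \geq m + 2$, and I will show this is incompatible with the existence of any arc avoiding both $u$ and $v$.

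For the degree bound, the naive attempt is to apply Lemma~\ref{lem:bigroot:11} separately to the two arcs $(u,v)$ and $(v,u)$ of the circuit and add; but in the odd case this only yields $\degree{u}+\degree{v}\geq m-1$, which is too weak for the counting below. Instead I would feed the \emph{pair} $(u,v),(v,u)$ into Corollary~\ref{cor:bigroot:19}. Writing $F = G-(u,v)$, deleting this arc lowers exactly the in-degree of $v$ and the out-degree of $u$, so $\degpreds[F]{v} = \degpreds{v}-1$ and $\degsuccs[F]{u} = \degsuccs{u}-1$. Corollary~\ref{cor:bigroot:19} applied to $(u,v)$ and then to $(v,u)$ in $F$ then reads
\[
\degpreds{u} + \degsuccs{v} + \left(\degpreds{v}-1\right) + \left(\degsuccs{u}-1\right) \geq m,
\]
which rearranges to $\degree{u} + \degree{v} \geq m+2$ uniformly in the parity of $m$.

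To finish, I would classify each arc by its number of endpoints in $\{u,v\}$: let $c_2$ be the number of arcs with both endpoints in $\{u,v\}$, $c_1$ those with exactly one, and $r$ those incident to neither, so that $m = c_2 + c_1 + r$. Since every arc contributes its number of $\{u,v\}$-endpoints to $\degree{u}+\degree{v}$, we get $\degree{u}+\degree{v} = 2c_2 + c_1 = m - r + c_2$, and the bound above forces $r \leq c_2 - 2$. Because $G$ is simple, the only candidate arcs with both endpoints in $\{u,v\}$ are $(u,v)$ and $(v,u)$, so $c_2 \leq 2$ and hence $r = 0$, i.e. every arc is incident to $u$ or $v$. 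The one delicate point is the degree-bound step: one must exploit the shared $2$-circuit through Corollary~\ref{cor:bigroot:19} (equivalently, remove one of its arcs so that the two $-1$ corrections appear) rather than through a term-by-term use of Lemma~\ref{lem:bigroot:11}, which is exactly what upgrades the estimate from $m-1$ to the needed $m+2$ in the odd case. Once that is in hand the counting is routine.
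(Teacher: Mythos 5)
Your proposal is correct and follows essentially the same route as the paper: both apply Corollary~\ref{cor:bigroot:19} to the pair $(u,v)$, $(v,u)$ (exploiting the two $-1$ corrections from deleting one arc of the circuit) to get $\degree{u}+\degree{v}\geq m+2$, and then close with the same counting argument — your $c_2,c_1,r$ classification is just a repackaging of the paper's indicator sum $\alpha^{s,t}$, with simplicity of $G$ capping the contribution of arcs inside $\{u,v\}$ at $2$ each and of all other arcs at $1$.
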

    \begin{proof}
        Let $F = G - (u, v)$. Then $\degpreds[F]{v} = \degpreds[G]{v} - 1$ and $\degsuccs[F]{u} = \degsuccs[G]{v} - 1$.
        By Corollary~\ref{cor:bigroot:19}, 
        $\degpreds[G]{u} + \degsuccs[G]{v} + \degpreds[G]{v} + \degsuccs[G]{u} - 2 \geq m$.
        Let $\alpha^{s,t} = \mathbbm{1}_{u=t} + \mathbbm{1}_{v=s} + \mathbbm{1}_{v=t} + \mathbbm{1}_{u=s}$. We have $\alpha^{u,v} = \alpha^{v,u} = 2$ and $\forall (s,t)\in A \setminus \{(u,v),(v,u)\}$, $\alpha^{s,t} \leq 1$
        (otherwise it would be a loop or $G$ would be a multigraph).
        We have $\sum_{(s,t)\in A} \alpha^{s,t} \geq m + 2$, so $\forall (s,t)\in A \setminus \{(u,v),(v,u)\}$, $\alpha^{s,t} = 1$
        and every arc is incident on $u$ or $v$.
    \end{proof}

    \begin{lemma}
    \label{lem:bigroot:22.0}
        If $G$ is optimal with $m \geq 7$
        
        $\bullet$ and there is a 2-length circuit $C_1 = (u, v, u)$
        
        $\bullet$ and there is another 2-length circuit $C_2$
        
        $\bullet$ and all nodes of $G$ are neighbors of $u$ or all nodes of $G$ are neighbors of $v$\\
        Then all arcs are incident to $u$ or all arcs are incident to $v$.
    \end{lemma}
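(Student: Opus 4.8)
The plan is to apply Lemma~\ref{lem:bigroot:16} to \emph{both} prescribed $2$-length circuits and then use the neighborhood hypothesis together with $m \geq 7$ to rule out the handful of arcs that could avoid $u$ (or $v$). First, Lemma~\ref{lem:bigroot:16} applied to $C_1 = (u,v,u)$ tells us that every arc of $G$ is incident to $u$ or $v$. Next I would locate $C_2$: writing $C_2 = (c, b, c)$, each of its two arcs $(c,b)$ and $(b,c)$ must be incident to $\{u,v\}$, so at least one of $c, b$ lies in $\{u,v\}$; since $C_2 \neq C_1$ we cannot have $\{c,b\} = \{u,v\}$, hence exactly one endpoint of $C_2$ is in $\{u,v\}$. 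Both the hypotheses and the conclusion are symmetric under exchanging $u$ and $v$, so I may assume $c = u$, i.e.\ $C_2 = (u,b,u)$ with $b \notin \{u,v\}$, and apply Lemma~\ref{lem:bigroot:16} again to obtain that every arc is incident to $u$ or $b$.

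Combining the two covering statements, every arc is either incident to $u$ or incident to both $v$ and $b$; the latter forces it to be $(v,b)$ or $(b,v)$. Thus the only arcs that can fail to be incident to $u$ are $(v,b)$ and $(b,v)$. I would then invoke the third hypothesis. Since the arcs incident to $v$ are contained in $\{(u,v),(v,u),(v,b),(b,v)\}$, the neighbors of $v$ are contained in $\{u,b\}$; so if every node of $G$ were a neighbor of $v$, then $G$ would have only the three nodes $u, v, b$ and hence at most $6 < m$ arcs, a contradiction. Therefore the surviving disjunct is that every node is a neighbor of $u$.

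It remains to show that neither $(v,b)$ nor $(b,v)$ is present, which I expect to be the main obstacle. If exactly one of them, say $(v,b)$, belongs to $A$, a direct count of the incident arcs gives $\degpreds{v} = \degsuccs{b} = 1$, so $\degpreds{v} + \degsuccs{b} = 2$, contradicting the lower bound of Lemma~\ref{lem:bigroot:11} for every $m \geq 7$ (the symmetric count handles $(b,v)$). The delicate case is when both $(v,b)$ and $(b,v)$ are present: then $(v,b,v)$ is a third $2$-length circuit, the degree-sum bound alone becomes too weak for small odd $m$, and instead I would apply Lemma~\ref{lem:bigroot:16} to $(v,b,v)$ as well. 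Intersecting the three covering conditions --- incident to $u$ or $v$, to $u$ or $b$, and to $v$ or $b$ --- confines every arc to the node set $\{u,v,b\}$, giving again at most $6 < m$ arcs and the final contradiction. Hence every arc is incident to $u$, which is the desired conclusion.
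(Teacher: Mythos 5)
Your proof is correct, and its first half coincides with the paper's: both apply Lemma~\ref{lem:bigroot:16} to $C_1$ and then to $C_2=(u,x,u)$ (your $b$ is the paper's $x$), concluding that the only arcs that can avoid $u$ are $(v,x)$ and $(x,v)$. Where you diverge is the final contradiction. The paper uses an exchange argument: it deletes $(v,x)$ and $(x,v)$, attaches a fresh $2$-circuit $(u,y),(y,u)$ at $u$, and verifies $\Phi(F)>\Phi(G)$, the hypothesis $m\geq 7$ guaranteeing that $u$ carries enough incident arcs for the inequality to be strict, contradicting optimality. You instead split into two cases and dispatch each with an already-proven lemma: if exactly one of $(v,b),(b,v)$ is present, Lemma~\ref{lem:bigroot:11} applied to that arc gives $\degpreds{v}+\degsuccs{b}=2$, below the required bound for every $m\geq 7$; if both are present, a third application of Lemma~\ref{lem:bigroot:16} to the circuit $(v,b,v)$ traps every arc inside $\{u,v,b\}$, forcing $m\leq 6$. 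Your route buys rigor and economy: it avoids the $\Phi$ computation that the paper compresses into ``we verify'' (a computation that, done carefully, must itself distinguish whether one or both of the arcs $(v,x),(x,v)$ exist), and it relies only on stated lemmas. The paper's rewiring, on the other hand, is one uniform construction and rehearses the exchange technique that recurs in Lemmas~\ref{lem:bigroot:22.1} and~\ref{lem:bigroot:23} and Theorem~\ref{the:bigroot:24}. One cosmetic remark: the paragraph in which you invoke the third hypothesis (all nodes neighbors of $u$ or all neighbors of $v$) to conclude that every node is a neighbor of $u$ is never used afterwards; like the paper's own proof, yours establishes the conclusion without that hypothesis, so that paragraph can simply be deleted.
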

    \begin{proof}
        $C_2$ arcs are incident to $u$ or $v$ by Lemma~\ref{lem:bigroot:16}. Without loss of generality, let assume that $C_2 = (u,x,u)$.
        Then by Lemma~\ref{lem:bigroot:16}, every arc is incident to $u$ or $v$ and every arc is incident to $u$ or $x$ so an arc that is not incident to $u$ is necessarily $(v, x)$ or $(x, v)$. There can therefore be no more than 2 arcs that are not incident to $u$ by Lemma~\ref{lem:bigroot:16}. If $m \geq 7$, there is at least one other arc incident to $u$. So if we consider the digraph $F$ where $(v, x)$ and $(x, v)$ are replaced by $(u, y)$ and $(y, u)$ where $y$ is a new node, we verify that $\Phi (F) > \Phi (G)$, which is excluded by optimality of $G$.
    \end{proof}
    \begin{lemma}
    \label{lem:bigroot:22.1}
         If $G$ is optimal, with a unique 2-length circuit $(u, v, u)$ and all nodes are neighbors of $u$ or all nodes are neighbors of $v$, then all arcs are incident to $u$ or all arcs are incident to $v$.
    \end{lemma}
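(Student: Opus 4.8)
The plan is to argue by contradiction after a symmetry reduction. Since the $2$-circuit $(u,v,u)$ and the conclusion ``all arcs incident to $u$ or all arcs incident to $v$'' are both symmetric in $u$ and $v$, I may assume without loss of generality that all nodes are neighbors of $u$, and then it suffices to show that every arc is incident to $u$. By Lemma~\ref{lem:bigroot:16} every arc is incident to $u$ or to $v$, so an arc that is \emph{not} incident to $u$ is necessarily an arc between $v$ and some other node. Assuming there are $k\geq 1$ such ``stray'' arcs, I would build from $G$ a digraph $F$ with $m$ arcs and $\Phi(F)>\Phi(G)$, contradicting optimality.

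First I would record the counts forced by the hypotheses. Writing $s_u,p_u$ (resp.\ $s_v,p_v$) for the numbers of successors and predecessors of $u$ (resp.\ $v$) other than $v$ (resp.\ $u$), the two circuit arcs give $\degsuccs{u}=1+s_u$, $\degpreds{u}=1+p_u$, and likewise for $v$. Because the circuit is unique, no other node lies on a $2$-circuit with $u$ or $v$; hence each of the $s_u+p_u$ neighbors of $u$ carries exactly one arc to/from $u$ and at most one arc to/from $v$. As every node is a neighbor of $u$, this yields $k=s_v+p_v\le s_u+p_u=(m-2)-k$, i.e.\ $k\le (m-2)/2$. A short computation then gives $\Phi(G)=m+s_up_u+s_vp_v+a+b$, where $a$ and $b$ count the nodes lying on a directed $2$-path $u\to w\to v$ and $v\to w\to u$ respectively, so $a\le p_v$ and $b\le s_v$.

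The heart of the argument is the construction of $F$. Mirroring the replacement of Lemma~\ref{lem:bigroot:22.0}, I would delete the $k$ stray arcs and reattach the same number of arcs at $u$ as $\lfloor k/2\rfloor$ fresh $2$-circuits $(u,y_i,u)$ on new nodes $y_i$. In $F$ the node $v$ keeps only the circuit (contributing $1$), every former neighbor of $v$ now contributes $0$, each $y_i$ contributes $1$, and the two degrees of $u$ each grow by $k/2$. For even $k$, expanding and using $s_u+p_u=(m-2)-k$ together with $s_vp_v\le (k/2)^2$ and $a+b\le k$ gives
\[
\Phi(F)-\Phi(G)\ \ge\ \frac{k}{2}\,(m-3-k)\ \ge\ \frac{k\,(m-4)}{4}\ >\ 0,
\]
where the last inequality uses $k\le(m-2)/2$ and $m\ge 7$. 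This settles every even $k$, and (inserting one balancing pendant arc at $u$) the analogous estimate handles odd $k\ge 3$.

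The step I expect to be the real obstacle is the parity corner $k=1$, where the fresh-$2$-circuit bound degenerates to $0$. Here I would not add a new node but instead reroute the single stray arc onto its existing $u$-neighbor $w_0$ so as to \emph{complete} a $2$-circuit $(u,w_0,u)$: depending on whether $w_0$ is a predecessor or a successor of $u$, the stray arc is turned into $(u,w_0)$ or $(w_0,u)$, which is admissible precisely because $(u,w_0,u)$ is not already a circuit. A direct degree computation then yields $\Phi(F)-\Phi(G)\ge 1>0$ in each case. Since in all cases $\Phi(F)>\Phi(G)$, the assumption $k\ge 1$ contradicts the optimality of $G$; hence $k=0$, every arc is incident to $u$, and the lemma follows.
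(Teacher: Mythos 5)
Your proof is correct, and it takes a genuinely different route from the paper's. The paper performs one global transformation: it moves \emph{every} non-circuit arc of $v$ onto $u$ (redirecting the arcs of nodes adjacent to both $u$ and $v$, and re-attaching the arcs of common predecessors/successors to fresh nodes), which yields the exact identity $\Phi(F)-\Phi(G)=(\degpreds{u}-1)(\degsuccs{v}-1)+(\degpreds{v}-1)(\degsuccs{u}-1)\geq 0$; optimality then forces equality, and the remainder of the paper's proof is a case analysis on which degrees equal $1$, one branch of which ($\degpreds{u}=\degpreds{v}=1$) requires a further local exchange to reach a contradiction. You instead count the $k$ stray arcs at $v$, delete them, and reinvest them as 2-circuits at $u$, obtaining a \emph{strict} gain whenever $k\geq 1$; this avoids the equality analysis entirely, at the price of a parity split and the degenerate case $k=1$, which you rightly identify as the crux and resolve by completing a 2-circuit on the stray neighbor $w_0$. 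I checked your sketched steps and they do hold: in the four $k=1$ subcases the gain is $s_u$, $1+p_u$, $p_u$ or $1+s_u$, each at least $1$ because $w_0$ itself is the relevant successor or predecessor of $u$; and for odd $k\geq 3$, using $m-k\geq k+2$ the gain is at least $\tfrac{(k-1)(k+2)}{2}-\tfrac{3k}{2}+\tfrac14>0$. One blemish: you invoke $m\geq 7$, which is not a hypothesis of this lemma (unlike Lemmas~\ref{lem:bigroot:22.0} and~\ref{lem:bigroot:23}); it is also unnecessary, since the stray neighbors of $v$ are among the $s_u+p_u=m-2-k$ neighbors of $u$, so $k\geq 2$ already forces $m\geq 2+2k\geq 6$ and hence $m-3-k\geq(m-4)/2>0$, while the $k=1$ argument needs no lower bound on $m$ at all. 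Net comparison: the paper's single transformation is quicker to verify but must be followed by a delicate equality-case analysis; your construction needs more bookkeeping (parity, the $k=1$ repair) but closes every case by a direct strict improvement, which is arguably more robust.
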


    \begin{proof}
        Let assume that all nodes are neighbors of $u$ (case of $v$ is symmetrical).

        Let separate the nodes of $G - (u, v) $ into six sets:

        \begin{minipage}{0.49\linewidth}
            \begin{itemize}[leftmargin=0pt]
                \item $X_{uv}$ successor nodes of $u$\\and predecessors of $v$.
                \item $X_{vu}$ successor nodes of $v$\\and predecessors of $u$.
                \item $\Gamma^-_{uv}$ predecessor nodes of $u$ and $v$.
            \end{itemize}
        \end{minipage}
        \begin{minipage}{0.49\linewidth}
            \begin{itemize}[leftmargin=0pt]
                \item $\Gamma^-_{u}$ predecessor nodes of $u$\\but not adjacent to $v$.
                \item $\Gamma^+_{u}$ successor nodes of $u$\\but not adjacent to $v$.
                \item $\Gamma^+_{uv}$ successor nodes of $u$ and $v$.
            \end{itemize}
        \end{minipage}\\ \\
        
        Since there is no 2-length circuit in $G$ except $(u,v,u)$, there are no other categories. Note that all the nodes in these 6 categories are of degree 1 (if they are adjacent only to $u$) or 2 (if they are adjacent to $u$ and $v$). 

        Let now consider the digraph $F$ where
        \begin{itemize}
            \item for $x \in X_{uv}$, we replace the arc $(x, v)$ by $(x, u)$
            \item for $x \in X_{vu}$, we replace the arc $(v, x)$ by $(u, x)$
            \item for $x \in \Gamma^-_{uv}$, we replace the arc $(x, v)$ by arc $(y, u)$ where $y$ is a new node.
            \item for $x \in \Gamma^+_{uv}$, we replace the arc $(v, x)$ by arc $(u, z)$ where $z$ is a new node.
        \end{itemize}
        Let $\Gamma = \Gamma^-_{uv} \cup \Gamma^+_{uv} \cup \Gamma^-_{u} \cup \Gamma^+_{u}$. So we have
        \begin{align*}
            \Phi (G) &= \degpreds{u} \cdot \degsuccs{u} + \degpreds{v} \cdot \degsuccs{v} + \sum\limits_{x \in X_{uv}} \degpreds{x} \cdot \degsuccs{x} + \sum\limits_{x \in X_{vu}} \degpreds{x} \cdot \degsuccs{x} + \sum_{x \in \Gamma} \degpreds{x} \cdot \degsuccs{x}\\
            \Phi (G) &= \degpreds{u} \cdot \degsuccs{u} + \degpreds{v} \cdot \degsuccs{v} + \sum\limits_{x \in X_{uv}} 1 + \sum\limits_{x \in X_{vu}} 1 + 0\\
            \intertext{In $F$, $v$ is now adjacent only to $u$ with the 2-length circuit. All other arcs incident to $v$ are now incident to $u$.}
            \Phi (F) &= (\degpreds{u} + \degpreds{v} - 1) \cdot (\degsuccs{u} + \degsuccs{v} - 1) + 1 \cdot 1 + \sum\limits_{x \in X_{uv}} 1 + \sum\limits_{x \in X_{vu}} 1\\
            \Phi (F) &= \Phi (G) + \degpreds{u} \cdot \degsuccs{v} + \degpreds{v} \cdot \degsuccs{u} - \degpreds{v} - \degsuccs{v} - \degpreds{u} - \degsuccs{u} + 2\\
            \Phi (F) &= \Phi (G) + (\degpreds{u} \cdot \degsuccs{v} + 1 - \degpreds{u} - \degsuccs{v}) + (\degpreds{v} \cdot \degsuccs{u} + 1 - \degpreds{v} - \degsuccs{u})
        \end{align*}

        So $F$ is optimal and since $\Phi (G) = \Phi (F)$ then
        
        \centering{$(\degpreds{u} \cdot \degsuccs{v} + 1 - \degpreds{u} - \degsuccs{v}) + (\degpreds{v} \cdot \degsuccs{u} + 1 - \degpreds{v} - \degsuccs{u}) = 0$}

        Note that for any pair of positive non-zero integers, we have $xy + 1\geq x + y$. Equality exists only if $x = 1$ or $y = 1$. We deduce from this inequality that $\degpreds{u} = 1$ or $\degsuccs{v} = 1$ and that $\degpreds{v} = 1$ or $\degsuccs{u} = 1$.
        If $\degpreds{v} = \degsuccs{v} = 1$ then the lemma is proved.
        If $\degpreds{u} = \degsuccs{u} = 1$ then $u$ has only $v$ as a neighbor, the digraph contains only these two nodes and the lemma is proved.
        If $\degpreds{u} = \degpreds{v} = 1$ then among the neighbors of $u$ subsist the sets $\Gamma^+_{uv}$ and $\Gamma^+_{u}$. If $\Gamma^+_{uv} = \emptyset$ then the lemma is proved. So let consider a node $x \in \Gamma^+_{uv}$ and $G"$ the digraph where the arc $(v, x)$ is replaced by $(x, u)$. Let $\Sigma = \sum_{y \in V\backslash\{x, u, v\}} \degsuccs{y} \degpreds{y}$. 
        \begin{align*}
            \Phi (G) &= \degpreds{u} \cdot \degsuccs{u} + \degpreds{v} \cdot \degsuccs{v} + \degpreds{x} \cdot \degsuccs{x} + \Sigma = 1 \cdot \degsuccs{u} + 1 \cdot \degsuccs{v} + 2 \cdot 0 + \Sigma\\
            \Phi (G") &= (\degpreds{u} + 1) \cdot \degsuccs{u} + \degpreds{v} \cdot (\degsuccs{v} - 1) + (\degpreds{x} - 1) \cdot (\degsuccs{x} + 1) + \Sigma\\
            \Phi (G") &= \Phi (G) + \degsuccs{u} - \degpreds{v} + (\degpreds{x} - 1) \cdot (\degsuccs{x} + 1) = \Phi (G) + \degsuccs{u} - 1 + 1
        \end{align*}

        This contradicts the optimality of $G$. The case $\degsuccs{u} = 1$ and $\degsuccs{v} = 1$ is symmetrical. The lemma has been proved.
    \end{proof}

    \begin{lemma}
    \label{lem:bigroot:23}
        If $G$ is optimal with $m \geq 7$ and a 2-length circuit $(u, v, u)$ then every arc is incident to $u$ or every arc is incident to $v$.
    \end{lemma}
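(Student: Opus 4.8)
The plan is to use Lemma~\ref{lem:bigroot:16} to reduce this statement to the hypotheses already handled by Lemmas~\ref{lem:bigroot:22.0} and~\ref{lem:bigroot:22.1}. By Lemma~\ref{lem:bigroot:16}, every arc is incident to $u$ or to $v$, so every node carrying an arc is a neighbour of $u$ or of $v$. The conclusion I must reach is precisely the common conclusion of Lemmas~\ref{lem:bigroot:22.0} and~\ref{lem:bigroot:22.1}, and these two cover respectively the case where a second $2$-length circuit exists and the case where $(u,v,u)$ is the only one; together they exhaust all possibilities. Both, however, carry the extra hypothesis that all nodes are neighbours of $u$ or all nodes are neighbours of $v$. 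Hence it suffices to establish this disjunction for $G$: once it holds, the lemma follows by invoking Lemma~\ref{lem:bigroot:22.0} or Lemma~\ref{lem:bigroot:22.1}.

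To prove the disjunction I would argue by contradiction, assuming there is a node $a$ adjacent to $u$ but not $v$ and a node $b$ adjacent to $v$ but not $u$. By Lemma~\ref{lem:bigroot:16} every arc at $a$ joins $a$ to $u$ and every arc at $b$ joins $b$ to $v$, so each of $a$ and $b$ is, relative to its centre, a pure successor, a pure predecessor, or a $2$-length circuit. I would also record the identity $\degpreds{u} + \degsuccs{u} + \degpreds{v} + \degsuccs{v} = m + 2$, obtained by counting arc endpoints at $\{u,v\}$: every arc contributes one such endpoint, and the two arcs $(u,v)$ and $(v,u)$ contribute two.

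The engine of the argument is a pair of modifications keeping the arc count equal to $m$, so that optimality of $G$ forces the change in $\Phi$ to be nonpositive. First, reattaching all arcs of $b$ from $v$ to $u$ (replacing $(v,b)$ by $(u,b)$ and $(b,v)$ by $(b,u)$) changes $\Phi$ by
\[
\beta^-(\degpreds{u} - \degpreds{v}) + \beta^+(\degsuccs{u} - \degsuccs{v}) + 2\beta^-\beta^+ \le 0,
\]
where $\beta^-,\beta^+ \in \{0,1\}$ flag the presence of $(v,b)$ and $(b,v)$; the symmetric move for $a$ gives the analogous inequality with $u$ and $v$ exchanged. When $a$ and $b$ are both $2$-length circuits these two inequalities read $\degpreds{u}+\degsuccs{u} \le \degpreds{v}+\degsuccs{v} - 2$ and its reverse, an immediate contradiction. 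Second, when $a$ and $b$ have opposite orientations (one a pure successor, the other a pure predecessor), I would reconnect one pendant through the other — for instance deleting $(a,u)$ and adding $(b,u)$ when $b$ is a successor of $v$ and $a$ a predecessor of $u$ — which isolates $a$, turns $b$ into a node of product $1$, and leaves the degrees of $u$ and $v$ (and the count $m$) unchanged, so $\Phi$ strictly increases: contradiction.

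This leaves the aligned same-orientation cases, where the reattachment inequalities collapse to $\degpreds{u} = \degpreds{v}$ or $\degsuccs{u} = \degsuccs{v}$. Here I would instead create a $2$-length circuit at a centre (e.g. deleting $(v,b)$ and adding $(a,u)$, making $a$ a $2$-circuit with $u$ while isolating $b$), whose effect on $\Phi$ is $1 + \degsuccs{u} - \degpreds{v}$, and its mirror $1 + \degsuccs{v} - \degpreds{u}$; combined with the degree equalities and the identity $\degpreds{u}+\degsuccs{u}+\degpreds{v}+\degsuccs{v}=m+2$, the bound $m \ge 7$ should force one of these modifications to strictly increase $\Phi$. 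I expect this last step to be the main obstacle: it is a finite but delicate case analysis, made intricate by nodes adjacent to \emph{both} $u$ and $v$, which feed the centre degrees without lying on the $u$-only or $v$-only side, and it is exactly here that $m \ge 7$ is needed to guarantee that an improving modification always exists. Having ruled out both sides being nonempty, I conclude the disjunction and finish by applying Lemma~\ref{lem:bigroot:22.0} or Lemma~\ref{lem:bigroot:22.1}.
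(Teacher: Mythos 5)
Your reduction is exactly the paper's skeleton: by Lemma~\ref{lem:bigroot:16} every arc is incident to $u$ or $v$, so it suffices to rule out the coexistence of a node $a$ adjacent only to $u$ and a node $b$ adjacent only to $v$, after which Lemmas~\ref{lem:bigroot:22.0} and~\ref{lem:bigroot:22.1} finish. Your ``opposite orientations'' move (graft one pendant's arc onto the other, creating a node of product $1$) is also precisely the optimality observation the paper uses. But the endgame has a genuine gap, which you partly concede. First, your case analysis is not exhaustive: you treat (2-circuit, 2-circuit), (pure successor, pure predecessor), and the two aligned pure cases, but never the mixed case where one of $a,b$ forms a 2-circuit with its centre and the other is a pure source or sink; there your reattachment inequalities only yield, say, $\degsuccs{u}\le\degsuccs{v}$ and $\degpreds{u}\ge\degpreds{v}+2$, which is no contradiction. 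Second, and more seriously, in the aligned case the inequalities you list are mutually consistent, so no amount of case analysis on them can close the proof. For $a,b$ both sources they amount to $\degsuccs{u}=\degsuccs{v}=S$, $S\ge\degpreds{u}+1$, $S\ge\degpreds{v}+1$ and $2S+\degpreds{u}+\degpreds{v}=m+2$, all of which are satisfied by $S=3$, $\degpreds{u}=\degpreds{v}=2$, $m=8\ge 7$; indeed the (non-optimal) digraph with circuit $(u,v,u)$, arcs $(a,u)$, $(b,v)$, and two extra out-arcs at each of $u$ and $v$ satisfies every one of your constraints, and none of your moves improves it. So $m\ge 7$ cannot ``force one of these modifications to strictly increase $\Phi$''; your toolkit is simply insufficient at this point.

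The paper escapes this dead end structurally rather than numerically. Pair the $v$-only nodes $y_1,\dots,y_q$ with $u$-only nodes $x_1,\dots,x_q$ (with $q\le p$ after relabeling) and transplant each $y_i$'s arcs onto $x_i$, replacing $(y_i,v)$ by $(x_i,v)$ and $(v,y_i)$ by $(v,x_i)$. Your own computation shows that optimality forces each pair to be both sources or both sinks, since otherwise the transplant strictly increases $\Phi$. In the surviving aligned case, instead of hunting for an improving move, the paper observes that the transplanted digraph $F$ has $\Phi(F)=\Phi(G)$, hence is itself optimal, still contains the circuit $(u,v,u)$, and now has every non-isolated node adjacent to $u$. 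Lemmas~\ref{lem:bigroot:22.0} and~\ref{lem:bigroot:22.1} therefore apply to $F$, and their conclusion --- every arc of $F$ incident to $u$, or every arc of $F$ incident to $v$ --- is contradicted by the two arcs $(x_1,u)$ and $(x_1,v)$ that $F$ contains. Re-applying the two auxiliary lemmas to the modified optimal digraph, rather than only to $G$, is the idea your proof is missing, and it is exactly what makes the aligned case (the one you identified as the main obstacle) disappear.
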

    \begin{proof}
        By Lemma~\ref{lem:bigroot:16}, every arc is incident to $u$ or $v$.
        If any node is neighbor of $u$ or any node is neighbor of $v$ then according to Lemma~\ref{lem:bigroot:22.0}~and~\ref{lem:bigroot:22.1}, the lemma is proved.
        Else consider $X = \{x_1, x_2, \dots, x_p\}$ the set of neighbors of $u$ which are not neighbors of $v$ and $Y = \{y_1, y_2, \dots, y_q\}$ the set of $v$ which are not neighbors of $u$. Suppose $q \leq p$. Let construct the digraph $F$ where the arcs $(y_i, v)$ (respectively $(v, y_i)$) are replaced by $(x_i, v)$ (respectively $(v, x_i)$). By optimality of $G$, for all $i \leq q$, $x_i$ and $y_i$ are sources or $x_i$ and $y_i$ are sinks. Otherwise, $\degpreds[F]{x_i}\cdot\degsuccs[F]{x_i} = 1$ and so $\Phi (F) > \Phi (G)$. Without loss of generality, let assume that $x_1$ and $y_1$ are sources. So there are the arcs $(x_1, u)$ and $(x_1, v)$ in $F$. Since $G$ is optimal, so is $F$. Note that in $F$ every node is neighbor of $u$ or every node is neighbor of $v$ so by Lemma~\ref{lem:bigroot:22.0}~and~\ref{lem:bigroot:22.1}, every arc of $F$ is incident to $u$ or every arc of $F$ is incident to $v$. This contradicts the existence of $(x_1, u)$ and $(x_1, v)$.
    \end{proof}

    \begin{theorem}
    \label{the:bigroot:24}
        If $G$ is optimal and $m \geq 7$ then $G$ is isomorphic to $O_m$ or $O_m^T$.
    \end{theorem}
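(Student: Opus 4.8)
The plan is to combine the two preceding strands of lemmas with a short extremal computation. By Lemma~\ref{lem:bigroot:20}, an optimal $G$ contains a $2$-length circuit $(u,v,u)$, and then Lemma~\ref{lem:bigroot:23} (using $m \geq 7$) tells us that every arc is incident to $u$ or every arc is incident to $v$. Since $u$ and $v$ play symmetric roles in the circuit, I would fix notation so that \emph{every arc is incident to $u$}. From this point $G$ is reduced to a ``star'' around $u$, and the only remaining task is to determine, among all such digraphs, which ones attain the optimal value of $\Phi$ given by Corollary~\ref{cor:maxlinedigraphe}.

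First I would read off the structure of a star around $u$. Every arc is either $(u,x)$ or $(x,u)$, so each node $x \neq u$ satisfies $\degsuccs{x} \leq 1$ and $\degpreds{x} \leq 1$, with $\degsuccs{x}\cdot\degpreds{x} = 1$ exactly when $x$ lies on a $2$-circuit with $u$ and $0$ otherwise. Writing $p = \degsuccs{u}$, $q = \degpreds{u}$ and letting $a$ be the number of $2$-length circuits through $u$, the out-arcs and in-arcs at $u$ form disjoint sets so $m = p + q$, while $a \leq \min(p,q)$. Summing the degree products then gives the clean identity $\Phi(G) = pq + a$. Thus the whole problem collapses to maximizing $pq + a$ over nonnegative integers with $p + q = m$ and $0 \leq a \leq \min(p,q)$, and to showing that the maximizer is essentially unique.

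For the optimization I would take $a = \min(p,q)$ (any smaller $a$ only decreases the objective) and, writing $d = |p - q|$, use $pq = (m^2 - d^2)/4$ together with $\min(p,q) = (m-d)/2$ to obtain $\Phi \leq (m^2 - d^2 + 2m - 2d)/4$, a strictly decreasing function of $d \geq 0$. When $m$ is even the minimum admissible value is $d = 0$, forcing $p = q = m/2$ and $a = m/2$, i.e. $b := p - a = 0$ and $c := q - a = 0$; this is exactly $O_m$. When $m$ is odd, $p + q = m$ forces $d$ odd, so the minimum is $d = 1$, forcing $\{p,q\} = \{(m-1)/2,(m+1)/2\}$ and $a = (m-1)/2$; the two orderings give $(b,c) = (0,1)$ and $(b,c) = (1,0)$, which are $O_m$ and its transpose $O_m^T$. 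In each case the triple $(a,b,c)$ determines $G$ up to relabelling of the non-central nodes, so $G \cong O_m$ or $G \cong O_m^T$, and the resulting value of $\Phi$ matches the optimum recorded after Corollary~\ref{cor:maxlinedigraphe}.

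The main obstacle is not any single hard step but making the uniqueness airtight: I would need to verify that no unbalanced split (any $d \geq 2$, and in particular the parity constraint that $d$ is odd in the odd case) and no deficient choice $a < \min(p,q)$ can also reach the optimum, which is precisely where the integrality and the parity of $m$ do the real work. I would also handle two bookkeeping points, namely that the reduction via Lemma~\ref{lem:bigroot:23} is invoked only up to the symmetry exchanging $u$ and $v$ (which is what produces the transpose in the odd case), and that any isolated nodes of $G$, affecting neither $m$ nor $\Phi$, may be discarded before asserting the isomorphism with $O_m$ or $O_m^T$.
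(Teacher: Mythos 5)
Your proposal is correct, and after the shared reduction it resolves the star case by a genuinely different method than the paper. Both arguments begin identically and unavoidably: Lemma~\ref{lem:bigroot:20} produces the 2-circuit $(u,v,u)$, Lemma~\ref{lem:bigroot:23} (with $m \geq 7$) forces all arcs onto a single node $u$, up to the $u$/$v$ symmetry. From there the paper stays purely combinatorial and perturbative: it never parametrizes the star, but instead runs three exchange arguments (merging a source with a sink, and replacing an arc $(y,u)$ by $(u,x)$ when two sources exist, each contradicting optimality of $G$) to whittle the star down to $O_m$ or $O_m^T$. You instead encode the star by the triple $(p,q,a) = (\degsuccs{u}, \degpreds{u}, \#\text{2-circuits})$, observe $\Phi(G) = pq + a$ with $p + q = m$ and $a \leq \min(p,q)$, and solve the integer maximization in closed form via $d = |p-q|$, getting $\Phi \leq (m^2 - d^2 + 2m - 2d)/4$, strictly decreasing in $d$; the parity of $m$ then pins down $d \in \{0,1\}$ and hence the maximizer up to swapping $p$ and $q$. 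Your computation is right (for odd $m$, $(m^2+2m-3)/4 = \left(\frac{m-1}{2}\right)^2 + m - 1$, as required), and strict monotonicity in $d$ together with strict monotonicity in $a$ makes uniqueness immediate, whereas the paper must argue each deviation separately. What the paper's route buys is that it never needs the explicit optimal value nor any algebra --- every step is a local improvement argument reusing only the optimality hypothesis; what your route buys is a single transparent computation that subsumes the paper's case analysis and makes the ``why exactly these digraphs'' question visible in one formula. Your added remark about discarding isolated nodes before asserting the isomorphism is a legitimate point of care that the paper silently skips.
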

    \begin{proof}
        By Lemma~\ref{lem:bigroot:20}, there are two symmetric arcs $(u, v)$ and $(v, u)$ in $G$. By Lemma~\ref{lem:bigroot:23}, every arc is incident to $u$ or every arc is incident to $v$. Let assume without loss of generality that they are incident to $u$.
        So $G$ is a star where every node is adjacent to $u$ and is connected to $u$ by one arc or two symmetrical arcs.
        Note that if there is a source and a sink  in $G$ connected to $u$ then we can merge these nodes to increase $\Phi (G)$. So there are only sources or only sinks in $G$.
        If there are no source and no sink in $G$, then we are in the case of the first digraph in Figure~\ref{fig:bigroot:1}.
        If there are only sources in $G$, we can deduce that $\degsuccs{u} \leq \degpreds{u}$. If there is only one source in $G$, then we are in the case of the last digraph in Figure~\ref{fig:bigroot:1}. And if there are two sources $x$ and $y$ in $G$, then by replacing $(y, u)$ by $(u, x)$, we obtain a digraph $F$ such that 
        \noindent\textbf{}$$\Phi (F) = \Phi (G) - \degsuccs{u} + \degpreds{u} + 1$$
        Since $\degsuccs{u} \leq \degpreds{u}$, we have a contradiction with the optimality of $G$.
        Similarly, if there are only sinks in $G$, we are in the case of the second digraph in Figure~\ref{fig:bigroot:1}.
    \end{proof}

\section{Conclusion}

    In this paper, we have shown that the maximum number of arcs in a line digraph with $n$ nodes is $\left(\frac{n}{2}\right)^2 + \frac{n}{2}$ arcs if $n$ is even, and $\left(\frac{n - 1}{2}\right)^2 + n - 1$ arcs otherwise.
    We have also shown that , for $n \geq 7$, the only line digraphs with so many arcs are those shown in Figure~\ref{fig:maxlinedigraphe}.

\DeclareEmphSequence{\itshape}

\bibliographystyle{acm}
\bibliography{references}

\section{Annexe}
    \DeclareEmphSequence{\bfseries\itshape}
    \newcommand\widthAnnexe{5.5cm}
    Here we present the initialization of the proof of Theorem~\ref{the:maxlinedigraphe}, for line digraphs of order $\leq 6$. We show this using the root digraph and we want to maximize $\Phi (G) = \sum_{v \in V} \degsuccs{v}\cdot \degpreds{v}$.
    All digraphs are assumed to be connected. We call $u$ a \emph{best node}, if $\forall v\in V, \degsuccs{v}\cdot \degpreds{v} \leq \degsuccs{u}\cdot \degpreds{u}$.

    \begin{lemma}
    \label{lem:centralnodefull}
        If all arcs of a digraph $G$ are incident to the best node $b$, with $x$ incoming arcs and $y$ outgoing arcs, then $\Phi(G) \leq x*y + min(x,y)$, with equality if there are $max(x,y) + 1$ nodes in $G$.
    \end{lemma}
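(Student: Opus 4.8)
The plan is to read off $\Phi(G)$ directly from the star structure and then bound the single term that is not already forced. Writing $x=\degpreds{b}$ and $y=\degsuccs{b}$, the node $b$ contributes exactly $\degpreds{b}\cdot\degsuccs{b}=xy$ to $\Phi(G)$. Every other node $w$ is adjacent only to $b$, and since $G$ is simple it carries at most one arc in each direction between $w$ and $b$; hence $(\degpreds{w},\degsuccs{w})$ is one of $(1,0)$, $(0,1)$, or $(1,1)$. The first two patterns give $\degpreds{w}\cdot\degsuccs{w}=0$, while the pattern $(1,1)$ gives $1$ and occurs precisely when $w$ lies on a $2$-length circuit with $b$, i.e.\ when $w\in\preds{b}\cap\succs{b}$. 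Summing these contributions yields the identity
\begin{equation*}
    \Phi(G)=xy+\left|\preds{b}\cap\succs{b}\right|.
\end{equation*}

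From here the inequality is immediate: since $|\preds{b}|=x$ and $|\succs{b}|=y$, we have $|\preds{b}\cap\succs{b}|\le\min(x,y)$, and substituting into the identity gives $\Phi(G)\le xy+\min(x,y)$.

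For the equality claim I would count the non-central nodes by inclusion--exclusion. They are exactly the nodes of $\preds{b}\cup\succs{b}$, so their number is $x+y-|\preds{b}\cap\succs{b}|$ and the total order of $G$ is $1+x+y-|\preds{b}\cap\succs{b}|$. If this order equals $\max(x,y)+1$, then $|\preds{b}\cap\succs{b}|=x+y-\max(x,y)=\min(x,y)$, and the identity collapses to $\Phi(G)=xy+\min(x,y)$, as required.

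I expect no genuine obstacle here; the only point requiring care is the enumeration of the three degree patterns for a non-central node, which uses simplicity of $G$ (no loops, no parallel arcs) to exclude any other possibility, together with the observation that a node contributes to $\Phi$ only when it lies on a $2$-circuit with $b$. The \emph{best node} hypothesis is not actually needed for the bound---it follows purely from the star structure---so it should be read as context inherited from the surrounding inductive argument rather than something the estimate relies on.
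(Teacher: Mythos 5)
Your proof is correct and uses essentially the same counting as the paper: $b$ contributes $xy$ to $\Phi(G)$, and each non-central node contributes $1$ exactly when it forms a 2-length circuit with $b$. The paper finishes by a WLOG ($x \geq y$) case analysis on the number of nodes, whereas you package the same count as the identity $\Phi(G) = xy + \left|\preds{b} \cap \succs{b}\right|$ and bound the intersection directly --- a cleaner write-up, but not a substantively different route.
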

    \begin{proof}
        Let assume that $x \geq y$, the other case is symmetrical. There are at least $x$ neighbors at b. If there are $x + 1$ nodes in $G$ then there is an outgoing arc from all the neighbors of $b$, and there is an incoming arc to $y$ of them. Thus $\Phi (G) = x*y + min(x,y)$. If there are $x + 1 + k$ nodes in $G$ $(0 \leq k \leq y)$, then there is an incoming arc to $k$ nodes with $\degsuccs{} = 0$, an outgoing arc from $x - y + k$ neighbors of $b$ with $\degpreds{} = 0$, and there is a 2-length circuit with $y - k$ of them. Thus $\Phi (G) = x*y + y - k \leq x*y + y$.
    \end{proof}
    
    We will treat each digraph according to its number of arcs and the value of the best node.
 
    \noindent\textbf{If there are two arcs in $G$.} $\forall v \in V, \degsuccs{v}\cdot \degpreds{v} \leq 1$. If there are two nodes $\{u,v\}$ in $G$, then there must be the arcs $(u,v)$ and $(v,u)$, thus $\Phi (G) = 2$ (figure~\ref{fig:root2}). If there are three nodes in $G$, then $\Phi (G) \leq 1$.

    \begin{figure}[ht!]
         \centering
         \includegraphics[width=\widthAnnexe]{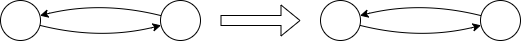}
         \captionsetup{justification=centering}
         \caption{root digraph with four arcs and their line digraph}
         \label{fig:root2}
    \end{figure}

    \noindent\textbf{if there are three arcs in $G$.} $\forall v \in V, \degsuccs{v}\cdot \degpreds{v} \leq 2$.

    If a best node $u$ is such that $\degsuccs{u}\cdot \degpreds{u} = 2$, then $\degsuccs{u} = 2$ and $\degpreds{u}=1$ (resp $\degsuccs{u} = 1$ and $\degpreds{u}=2$). In this case by Lemma~\ref{lem:centralnodefull}, $\Phi (G) = 3$ if there are three nodes in $G$ (figure~\ref{fig:root3a}, resp figure~\ref{fig:root3b}) and $\Phi (G) < 3$ if there are more nodes.

    If a best node $u$ is such that $\degsuccs{u}\cdot \degpreds{u} = 1$, and $\Phi (G) \geq 3$, then there are at least three nodes with $\degsuccs{}\cdot \degpreds{} = 1$ in $G$, this only happens with the 3-circuit (figure~\ref{fig:root3c}). In this case, it requires more than three arcs to have $\Phi (G) > 3$.
    \\

    \begin{minipage}{0.49\linewidth}
        \centering
        \includegraphics[width=\widthAnnexe]{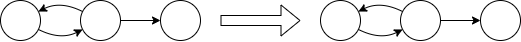}
        \captionsetup{justification=centering}
        \captionof{figure}{}
        \label{fig:root3a}
        \includegraphics[width=\widthAnnexe]{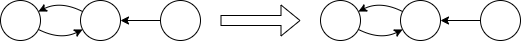}
        \captionsetup{justification=centering}
        \captionof{figure}{}
        \label{fig:root3b}
    \end{minipage}
    \begin{minipage}{0.49\linewidth}
        \centering
        \includegraphics[width=\widthAnnexe]{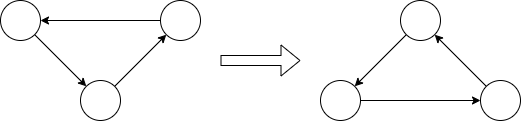}
        \captionsetup{justification=centering}
        \captionof{figure}{}
        \label{fig:root3c}
    \end{minipage}

    \begin{center}
         root digraph with four arcs and their line digraph
    \end{center}

    \noindent\textbf{if there are four arcs in $G$}. $\forall v \in V, \degsuccs{v}\cdot \degpreds{v} \leq 4$.

    If a best node $u$ is such that $\degsuccs{u}\cdot \degpreds{u} = 4$, then $\degsuccs{u} = 2$ and $\degpreds{u}=2$. In this case by Lemma~\ref{lem:centralnodefull}, $\Phi (G) = 6$ if there are three nodes in $G$ (figure~\ref{fig:root4}) and $\Phi (G) < 6$ if there are more nodes.

    If a best node $u$ is such that $\degsuccs{u}\cdot \degpreds{u} = 3$, then $\degsuccs{u} = 3$ and $\degpreds{u}=1$ (resp $\degsuccs{u} = 1$ and $\degpreds{u}=3$). In this case by Lemma~\ref{lem:centralnodefull}, $\Phi (G) \leq 4 < 6$

    If a best node $u$ is such that $\degsuccs{u}\cdot \degpreds{u} = 2$, then $\degsuccs{u} = 2$ and $\degpreds{u}=1$ (resp $\degsuccs{u} = 1$ and $\degpreds{u}=2$). Let assume that $\Phi (G) \geq 6$. If there is only one best node in $G$, then there are at least four nodes with $\degsuccs{}\cdot \degpreds{} = 1$, then $\sum_{v \in V} \degsuccs{v} > 4$, impossible. If there are two best nodes in $G$, then there are at least two nodes with $\degsuccs{}\cdot \degpreds{} = 1$, then $\sum_{v \in V} (\degsuccs{v} + \degpreds{v}) > 2*4$, impossible. If there are at least three best nodes in $G$, then $\sum_{v \in V} (\degsuccs{v} + \degpreds{v}) > 2*4$, impossible.

    If a best node $u$ is such that $\degsuccs{u}\cdot \degpreds{u} = 1$ and $\Phi (G) \geq 6$, then there are at least six nodes with $\degsuccs{}\cdot \degpreds{} = 1$ in $G$, then $\sum_{v \in V} \degsuccs{v} > 4$, impossible.

    \begin{figure}[ht!]
         \centering
         \includegraphics[width=\widthAnnexe]{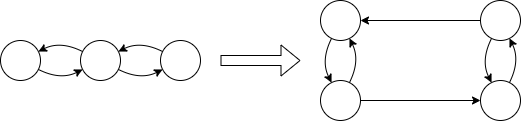}
         \captionsetup{justification=centering}
         \caption{root digraph with four arcs and their line digraph}
         \label{fig:root4}
    \end{figure}

    \noindent\textbf{if there are five arcs in $G$.} $\forall v \in V, \degsuccs{v}\cdot \degpreds{v} \leq 6$.

    If a best node $u$ is such that $\degsuccs{u}\cdot \degpreds{u} = 6$, then $\degsuccs{u} = 3$ and $\degpreds{u}=2$ (resp $\degsuccs{u} = 2$ and $\degpreds{u}=3$). In this case by Lemma~\ref{lem:centralnodefull}, $\Phi (G) = 8$ if there are four nodes in $G$ (figure~\ref{fig:root5a}, resp figure~\ref{fig:root5b}) and $\Phi (G) < 8$ if there are more nodes.

    If a best node $u$ is such that $\degsuccs{u}\cdot \degpreds{u} = 4$, then $\degsuccs{u} = 4$ and $\degpreds{u}=1$ (resp $\degsuccs{u} = 1$ and $\degpreds{u}=4$) or $\degsuccs{u} = 2$ and $\degpreds{u}=2$. If $\degsuccs{u} = 4$ (resp $\degpreds{u}=4$), then by Lemma~\ref{lem:centralnodefull}, $\Phi (G) \leq 5 < 8$. If $\degsuccs{u} = 2$ and $\degpreds{u}=2$, then there is only one best node in $G$, otherwise it would have at least six arcs since two nodes can only share two arcs. For the same reason, there is not a node $v$ such that $\degsuccs{v}\cdot \degpreds{v} = 3$ in $G$. If there is a node $v$ such that $\degsuccs{v}\cdot \degpreds{v} = 2$ in $G$, then it must share two arcs with $u$. Then there are two arcs between $u$ and $v$, an incoming arc on $u$, an outgoing arc from $u$, and an incident arc on $v$. If there are three nodes in $G$, then $\Phi (G) = 4 + 2 + 2 = 8$ (figure~\ref{fig:root5c}). If there are four nodes in $G$, then $\Phi (G) \leq 4 + 2 + 1 + 0 < 8$. If there are five nodes in $G$, then $\Phi (G) = 6 < 8$. If there is no node such that $\degsuccs{}\cdot \degpreds{} = 2$ in $G$ and $\Phi (G) \geq 8$, then there are at least four nodes with $\degsuccs{}\cdot \degpreds{} = 1$, and $\sum_{v \in V} \degsuccs{v} > 5$, impossible.

    If a best node $u$ is such that $\degsuccs{u}\cdot \degpreds{u} = 3$, then $\degsuccs{u} = 3$ and $\degpreds{u}=1$ (resp $\degsuccs{u} = 1$ and $\degpreds{u}=3$) and there is only one best node in $G$, otherwise it would have at least six arcs since two nodes can only share two arcs. Let assume $\Phi (G) \geq 8$. If there is one node such that $\degsuccs{}\cdot \degpreds{} = 2$ in $G$, then there are at least three nodes with $\degsuccs{}\cdot \degpreds{} = 1$, and $\sum_{v \in V} (\degsuccs{v} + \degpreds{v}) > 2*5$. If there are two nodes such that $\degsuccs{}\cdot \degpreds{} = 2$ in $G$, then there is at least one node with $\degsuccs{}\cdot \degpreds{} = 1$, and $\sum_{v \in V} (\degsuccs{v} + \degpreds{v}) > 2*5$. If there are at least three nodes such that $\degsuccs{}\cdot \degpreds{} = 2$ in $G$, then $\sum_{v \in V} (\degsuccs{v} + \degpreds{v}) > 2*5$. If there is no node such that $\degsuccs{}\cdot \degpreds{} = 2$ in $G$ and $\Phi (G) \geq 8$, then there is at least five nodes with $\degsuccs{}\cdot \degpreds{} = 1$, and $\sum_{v \in V} \degsuccs{v} > 5$, impossible.

    If a best node $u$ is such that $\degsuccs{u}\cdot \degpreds{u} = 2$, then $\degsuccs{u} = 2$ and $\degpreds{u}=1$ (resp $\degsuccs{u} = 1$ and $\degpreds{u}=2$). Let assume that $\Phi (G) \geq 8$. If there are at most two best nodes in $G$, then there are at least four nodes with $\degsuccs{}\cdot \degpreds{} = 1$, and $\sum_{v \in V} (\degsuccs{v} + \degpreds{v}) > 2*5$, impossible. If there are three best nodes in $G$, then there are at least two nodes with $\degsuccs{}\cdot \degpreds{} = 1$, and $\sum_{v \in V} (\degsuccs{v} + \degpreds{v}) > 2*5$, impossible. If there are at least four best nodes in $G$, then $\sum_{v \in V} (\degsuccs{v} + \degpreds{v}) > 2*5$, impossible.

    If a best node $u$ is such that $\degsuccs{u}\cdot \degpreds{u} = 1$ and $\Phi (G) \geq 8$, then there are at least eight nodes with $\degsuccs{}\cdot \degpreds{} = 1$ in $G$, and $\sum_{v \in V} \degsuccs{v} > 5$, impossible.

    \begin{figure}[ht!]
        \centering
        \hfill
        \begin{subfigure}[b]{0.49\textwidth}
            \centering
            \includegraphics[width=\widthAnnexe]{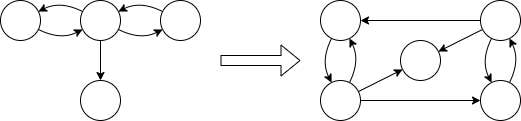}
            \caption{}
            \label{fig:root5a}
        \end{subfigure}
         \hfill
         \begin{subfigure}[b]{0.49\textwidth}
            \centering
            \includegraphics[width=\widthAnnexe]{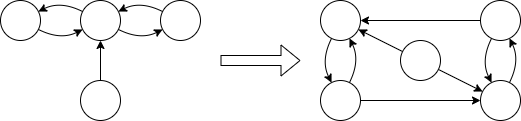}
            \caption{}
            \label{fig:root5b}
         \end{subfigure}
         \hfill
         \begin{subfigure}[b]{0.49\textwidth}
            \centering
            \includegraphics[width=\widthAnnexe]{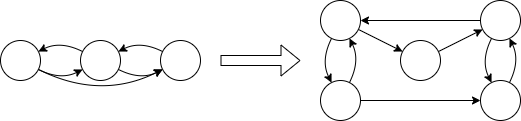}
            \caption{}
            \label{fig:root5c}
         \end{subfigure}
         \hfill
         \captionsetup{justification=centering}
         \caption{root digraph with five arcs and their line digraph}
         \label{fig:root5}
    \end{figure}

    \noindent\textbf{if there are six arcs in $G$.} $\forall v \in V, \degsuccs{v}\cdot \degpreds{v} \leq 9$. 

    If a best node $u$ is such that $\degsuccs{u}\cdot \degpreds{u} = 9$, then $\degsuccs{u} = 3$ and $\degpreds{u} = 3$. In this case by Lemma~\ref{lem:centralnodefull}, $\Phi (G) = 12$ if there are four nodes in $G$ (figure~\ref{fig:root6a}) and $\Phi (G) < 12$ if there are more nodes.

    If a best node $u$ is such that $\degsuccs{u}\cdot \degpreds{u} = 8$, then $\degsuccs{u} = 4$ and $\degpreds{u}=2$ (resp $\degsuccs{u} = 2$ and $\degpreds{u} = 4$). In this case by Lemma~\ref{lem:centralnodefull}, $\Phi (G) \leq 10 < 12$.

    If a best node $u$ is such that $\degsuccs{u}\cdot \degpreds{u} = 6$, then $\degsuccs{u} = 3$ and $\degpreds{u}=2$ (resp $\degsuccs{u} = 2$ and $\degpreds{u} = 3$) and there is no other node such that $\degsuccs{}\cdot \degpreds{} \geq 3$ in $G$, otherwise it would have at least seven arcs since two nodes can only share two arcs.
    Let assume $\Phi (G) \geq 12$. If there is one node such that $\degsuccs{}\cdot \degpreds{} = 2$ in $G$, then there are at least four nodes with $\degsuccs{}\cdot \degpreds{} = 1$, and $\sum_{v \in V} (\degsuccs{v} + \degpreds{v}) > 2*6$. If there are two nodes such that $\degsuccs{}\cdot \degpreds{} = 2$ in $G$, then there are at least two nodes with $\degsuccs{}\cdot \degpreds{} = 1$, and $\sum_{v \in V} (\degsuccs{v} + \degpreds{v}) > 2*6$. If there are at least three nodes such that $\degsuccs{}\cdot \degpreds{} = 2$, then $\sum_{v \in V} (\degsuccs{v} + \degpreds{v}) > 2*6$. If there is no node such that $\degsuccs{}\cdot \degpreds{} = 2$ in $G$ and $\Phi (G) \geq 12$, then there are at least six nodes with $\degsuccs{}\cdot \degpreds{} = 1$, and $\sum_{v \in V} \degsuccs{v} > 6$, impossible.

    If a best node $u$ is such that $\degsuccs{u}\cdot \degpreds{u} = 5$, then $\degsuccs{u} = 5$ and $\degpreds{u}=1$ (resp $\degsuccs{u} = 1$ and $\degpreds{u} = 5$). In this case by Lemma~\ref{lem:centralnodefull}, $\Phi (G) \leq 10 < 12$.

    If a best node $u$ is such that $\degsuccs{u}\cdot \degpreds{u} = 4$, then $\degsuccs{u} = 4$ and $\degpreds{u}=1$ (resp $\degsuccs{u} = 1$ and $\degpreds{u}=4$) or $\degsuccs{u} = 2$ and $\degpreds{u}=2$. If $\degsuccs{u} = 4$ (resp $\degpreds{u}=4$), then there is no other node such that $\degsuccs{}\cdot \degpreds{} \geq 2$ in $G$, otherwise it would have at least seven arcs since two nodes can only share two arcs. If $\Phi (G) \geq 12$, then there are at least eight nodes with $\degsuccs{}\cdot \degpreds{} = 1$ in $G$, and $\sum_{v \in V} \degsuccs{v} > 6$, impossible.
    If $\degsuccs{u} = 2$ and $\degpreds{u}=2$, then there are at most three nodes such that $\degsuccs{}\cdot \degpreds{} \geq 3$ in $G$, otherwise it would have at least seven arcs since two nodes can only share two arcs. If there are three best nodes in $G$, then $\Phi (G) = 12$ (figure~\ref{fig:root6a}). Else, if there is a node $v$ such that $\degsuccs{v}\cdot \degpreds{v} = 3$ in $G$, then it must share two arcs with $u$. So there are two arcs between $u$ and $v$, an incoming arc on $u$, an outgoing arc from $u$, and two arcs with the same orientation on $v$. If there are four nodes in $G$, then $\Phi (G) \leq 4 + 3 + 2 + 0 < 12$. If there are five nodes in $G$, then $\Phi (G) \leq 4 + 3 + 1 + 0 + 0 < 12$. If there are six nodes in $G$, then $\Phi (G) \leq 4 + 3 + 0 + 0 + 0 < 12$. If there are exactly two best nodes in $G$, then they must share two arcs and there must be at least four nodes in $G$. If there are four nodes in $G$, then $\Phi (G) = 4 + 4 + 1 + 1 < 12$. If there are five nodes in $G$, then $\Phi (G) = 4 + 4 + 1 + 0 + 0 < 12$. If there are six nodes in $G$, then $\Phi (G) = 4 + 4 + 0 + 0 + 0 + 0 < 12$. If there is only one best node in $G$, let assume $\Phi (G) \geq 12$. If there are one or two nodes such that $\degsuccs{}\cdot \degpreds{} = 2$ in $G$, then there are at least four nodes with $\degsuccs{}\cdot \degpreds{} = 1$, and $\sum_{v \in V} (\degsuccs{v} + \degpreds{v}) > 2*6$. If there are at least three nodes such that $\degsuccs{}\cdot \degpreds{} = 2$ in $G$, then $\sum_{v \in V} (\degsuccs{v} + \degpreds{v}) > 2*6$. If there is no node such that $\degsuccs{}\cdot \degpreds{} = 2$ in $G$ and $\Phi (G) \geq 12$, then there are at least eight nodes with $\degsuccs{}\cdot \degpreds{} = 1$, and $\sum_{v \in V} \degsuccs{v} > 6$, impossible.

    If a best node $u$ is such that $\degsuccs{u}\cdot \degpreds{u} = 3$, then $\degsuccs{u} = 3$ and $\degpreds{u}=1$ (resp $\degsuccs{u} = 1$ and $\degpreds{u}=3$) and there are at most three nodes such that $\degsuccs{}\cdot \degpreds{} = 3$ in $G$, otherwise it would have at least seven arcs since two nodes can only share two arcs. Let assume $\Phi (G) = 12$, if there are at least two best nodes $\{u,v\}$ in $G$, then they must share two arcs. So there are two arcs between $u$ and $v$. If $\degsuccs{u} = \degsuccs{v}$, then there are four, five or six nodes in $G$ and $\Phi (G) = 3 + 3 + 0 < 12$. If $\degsuccs{u} \neq \degsuccs{v}$, then there are four nodes in $G$, $\Phi (G) < 3 + 3 + 1 + 1 < 12$. If there are five nodes in $G$, then $\Phi (G) = 3 + 3 + 1 + 0 + 0 < 12$. If there are six nodes in $G$, then $\Phi (G) = 3 + 3 + 0 + 0 + 0 + 0 < 12$. Let assume that there is only one best node in $G$. If there are one or two nodes such that $\degsuccs{}\cdot \degpreds{} = 2$ in $G$, then there are at least five nodes with $\degsuccs{}\cdot \degpreds{} = 1$, and $\sum_{v \in V} (\degsuccs{v} + \degpreds{v}) > 2*6$. If there are at least three nodes such that $\degsuccs{}\cdot \degpreds{} = 2$ in $G$, then $\sum_{v \in V} (\degsuccs{v} + \degpreds{v}) > 2*6$. If there is no node such that $\degsuccs{}\cdot \degpreds{} = 2$ in $G$ and $\Phi (G) \geq 12$, then there are at least nine nodes with $\degsuccs{}\cdot \degpreds{} = 1$, and $\sum_{v \in V} \degsuccs{v} > 6$, impossible.

    If a best node $u$ is such that $\degsuccs{u}\cdot \degpreds{u} = 2$, then $\degsuccs{u} = 2$ and $\degpreds{u}=1$ (resp $\degsuccs{u} = 1$ and $\degpreds{u}=2$). Let assume that $\Phi (G) \geq 12$. If there are at most three best nodes in $G$, then there are at least six nodes with $\degsuccs{}\cdot \degpreds{} = 1$, and $\sum_{v \in V} \degsuccs{v} > 6$, impossible. If there are four best nodes in $G$, then there are at least four nodes with $\degsuccs{}\cdot \degpreds{} = 1$, and $\sum_{v \in V} \degsuccs{v} > 6$, impossible. If there are at least five best nodes in $G$, then $\sum_{v \in V} (\degsuccs{v} + \degpreds{v}) > 2*6$, impossible.

    If a best node $u$ is such that $\degsuccs{u}\cdot \degpreds{u} = 1$ and $\Phi (G) \geq 12$, then there are at least twelve nodes with $\degsuccs{}\cdot \degpreds{} = 1$ in $G$, then $\sum_{v \in V} \degsuccs{v} > 6$, impossible.

    \begin{figure}[ht!]
         \centering
         \hfill
         \begin{subfigure}[b]{0.49\textwidth}
            \centering
            \includegraphics[width=\widthAnnexe]{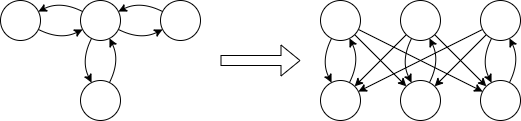}
            \caption{}
            \label{fig:root6a}
        \end{subfigure}
        \hfill
        \begin{subfigure}[b]{0.49\textwidth}
            \centering
            \includegraphics[width=\widthAnnexe]{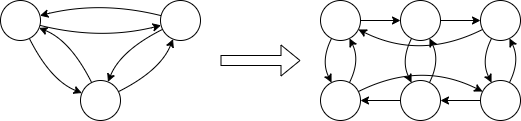}
            \caption{}
            \label{fig:root6b}
        \end{subfigure}
        \hfill
        \captionsetup{justification=centering}
        \caption{root digraph with six arcs and their line digraph}
        \label{fig:root6}
    \end{figure}

\end{document}